\numberwithin{equation}{section} 
\theoremstyle{plain}
\newtheorem{theo+}           {Theorem}      [section]
\newtheorem{prop+}  [theo+]  {Proposition}
\newtheorem{coro+}  [theo+]  {Corollary}
\newtheorem{lemm+}  [theo+]  {Lemma}
\newtheorem{defi+}  [theo+]  {Definition}
\newtheorem{conj+}  [theo+]  {Conjecture}
\theoremstyle{definition}
\newtheorem{rema+}  [theo+]  {Remark}
\newtheorem{prob+}  [theo+]  {Problem}
\newtheorem{exam+}  [theo+]  {Example}
\newenvironment{theorem}{\begin{theo+}}{\end{theo+}}
\newenvironment{proposition}{\begin{prop+}}{\end{prop+}}
\newenvironment{corollary}{\begin{coro+}}{\end{coro+}}
\newenvironment{lemma}{\begin{lemm+}}{\end{lemm+}}
\newenvironment{example}{\begin{exam+}}{\end{exam+}}
\newcommand{\ti}{\mathrm i}
\newcommand{\id}{\operatorname{id}}
\DeclareMathOperator{\Res}{Res}
 \newcommand{\Gampq}{\Gamma_{\!\!p,q}}
  \newcommand{\thp}{\theta_{p}}
\newcommand{\dup}{\textup{d}}
\begin{document}

\baselineskip 18pt
\larger[2]
\title
[ $Q$-operators for the Ruijsenaars model] 
{ $Q$-operators for the Ruijsenaars model}
\author{Eric Rains}
\address{Department of Mathematics, California Institute of Technology, Pasadena CA 91125}
\email{rains@caltech.edu}
\urladdr{}
\author{Hjalmar Rosengren}
\address
{Department of Mathematical Sciences
\\ Chalmers University of Technology and University of Gothenburg\\SE-412~96 G\"oteborg, Sweden}
\email{hjalmar@chalmers.se}

\begin{abstract}
We prove that the Ruijsenaars model admits a one-parameter commuting family of $Q$-operators. The commutativity is equivalent to an elliptic hypergeometric integral transformation that was conjectured by Gadde et al., and  has an alternative interpretation in terms of $S$-duality for quiver gauge theories. We present two proofs of this conjecture, one using the elliptic Macdonald polynomials of Langmann et al., and one using known results on elliptic hypergeometric integrals. We also  explain 
how the Noumi--Sano operators appear as degenerations of  $Q$-operators.
\end{abstract}

\maketitle

\section{Introduction}

The Ruijsenaars model \cite{ru87} describes a system of relativistic quantum particles, whose Hamiltonian is given by a difference operator with elliptic coefficients. Ruijsenaars established the integrability of this model, by constructing higher order difference operators commuting with the Hamiltonian. 
To study the  joint eigenfunctions of these higher order Hamiltonians is a difficult problem, see e.g.\  \cite{lns,ru05,ru09a,ru09b}. 
The hyperbolic limit case is better understood, thanks to work of Halln\"as and Ruijsenaars \cite{hr14,hr15,hr21}
and Belousov et al.\ \cite{b1,b2,b3,b4,bk}.
The latter series of papers is to a large extent based on 
the existence of a commuting family of \emph{$Q$-operators} for the model.

Let us briefly discuss such operators on a formal level, working over $\mathbb R$ for simplicity.
Suppose that $H$ is a linear operator, which is symmetric on  $\mathrm L^2(X,d\mu)$, 
and $K=K(x,y)$ a function on $X\times X$ satisfying
the \emph{kernel function identity}
$H_xK=H_y K$.
If we define
$$(Qf)(y)=\int_X K(x,y) f(x)\,d\mu(x), $$ 
then  at least formally 
\begin{align}\nonumber(HQf)(y)&=\int H_yK(x,y)f(x)\,d\mu(x)= \int H_xK(x,y)f(x)\,d\mu(x)\\
\label{hqc}&=\int K(x,y)H_xf(x)\,d\mu(x)=(QHf)(y),\end{align}
that is, $[H,Q]=0$. Suppose now that $H$ has a complete system of orthonormal eigenfunctions
$(e_k)_{k\in K}$. Then, we can  formally construct kernel functions as
\begin{equation}\label{kex}K(x,y)=\sum_{k\in K} a_ke_k(x)e_k(y), \end{equation}
where $a_k$ are scalars. Since $Qe_k=a_ke_k$, any two operators from this family commute.

For the Ruijsenaars model, although a one-parameter family of explicit kernel functions $K=K_c$ is known \cite{ru05}, it is not obvious that the corresponding operators $Q=Q_c$ mutually commute. Until now, this has only be proved in the hyperbolic limit case \cite{b1}. 
The non-relativistic hyperbolic model (based on differential rather than difference operators) was treated in \cite{h}.

Before we describe our results, we provide some explanation of why the word  $Q$-operator is used in this context, see also \cite{ks,skl}.
Originally, $Q$-operators were introduced by Baxter in his solution of the eight-vertex model and XYZ spin chain \cite{bax1,bax2}. 
Just as  the operators $Q_c$ mentioned above, they form a one-parameter family  commuting with the Hamiltonian.
However, the similarities go  deeper.  Pasquier and Gaudin \cite{pg} found analogous operators for the periodic quantum Toda lattice, and discovered that a quasi-classical limit yields B\"acklund transformations for the corresponding classical system. Halln\"as and Ruijsenaars \cite{hr1} obtained similar results for the Ruijsenaars model, showing that  kernel functions can be used to obtain B\"acklund transformations of classical Calogero--Moser--Sutherland systems. 
Another common feature of $Q$-operators, which was crucial in Baxter's work, is that their eigenvalues satisfy a first-order equation in the parameter.
For the general Ruijsenaars model, nothing seems to be known about the $Q$-operator eigenvalues. However, in the non-relativistic hyperbolic  case, Halln\"as found that they satisfy a first-order difference equation  \cite{h}.

In the present work, we show that the $Q$-operators for the Ruijsenaars model commute in the sense that
 the integral kernel  of $[Q_c,Q_d]$ vanishes identically. We will not discuss commutativity in a 
Hilbert space setting, which would also involve analyzing the domain and range of the operators. 
 It turns out that vanishing of the commutator kernel  is equivalent to a transformation formula for elliptic hypergeometric integrals conjectured by 
 Gadde, Rastelli, Razamat and Yan \cite{grry}. In their work, this identity appears from a seemingly very different context, namely, it expresses $S$-duality between superconformal indices of  quiver gauge theories.
 The one-dimensional case of this integral identity is due to van de Bult \cite{vdb} and a hyperbolic version is proved in \cite{b1}. 
 Moreover, a closely related transformation between finite sums was obtained by Langer et al.\ \cite{lsw}. However, the general case has been open until now. 

We will in fact give two proofs of the GRRY conjecture. The first proof, given in \S \ref{fps}, relies on results of Langmann, Noumi and Shiraishi \cite{lns}. These authors show that the Ruijsenaars operators have joint eigenfunctions  in a space of formal power series in the elliptic nome $p$. The constant term $p=0$ gives the  Macdonald polynomials  \cite{m}. We prove that these elliptic Macdonald polynomials  are also eigenfunctions of the $Q$-operators. Hence, the $Q$-operators commute on a space of formal power series, which is enough to deduce that the commutator kernel vanishes. As a byproduct, we obtain a version of the identity \eqref{kex} for elliptic Macdonald polynomials, see Corollary \ref{ecc}.

In \S \ref{sps} we give a  proof based on known results for elliptic hypergeometric integrals. More precisely,  an integral transformation due to the first author \cite{rains10} implies that the commutator kernel vanishes when integrated against certain explicit test functions. We deduce from this that the kernel itself vanishes.

In \cite{ns}, Noumi and Sano introduced an alternative family of difference operators commuting with the Ruijsenaars Hamiltonian.  
In  \S \ref{nss} we show that the Noumi--Sano operators appear  as degenerate cases of the $Q$-operators, for singular values of the parameter $c$.
To be precise, this relation also involves a ``change of gauge'', which is a well-known symmetry of the Ruijsenaars model. A more complete picture appears if one considers four families of operators $Q_c$, $\tilde Q_c$, $H^{(k)}$ and $\tilde H^{(k)}$, where $\tilde Q_c$ are gauge transformed versions of the $Q$-operators and $H^{(k)}$ and $\tilde H^{(k)}$ two corresponding versions of the Noumi--Sano operators. As we describe in \S \ref{cps}, each pair of such operators commute thanks to a transformation formula for elliptic hypergeometric functions.

\section{Notation}

Throughout, the parameters $p$, $q$ and $t$ are assumed to satisfy $|p|<1$, $|q|<1$ and $|pq|<|t|<1$. Recall the standard notation
$$(x;q)_\infty=\prod_{j=0}^\infty(1-x q^j). $$
We will use the
multiplicative theta function
$$\theta_p(x)=(x;p)_\infty(p/x;p)_\infty $$
and Ruijsenaars's elliptic gamma function
\begin{equation}\label{gpg}\Gampq(x)=\prod_{j,k=0}^\infty\frac{1-p^{j+1}q^{k+1}/x}{1-p^jq^k x}.\end{equation}
We will often use the identities
$$\Gampq(qx)=\thp(x)\Gampq(x), $$
\begin{equation}\label{ginv} \Gampq(x)=\frac 1{\Gampq(pq/x)}.\end{equation}
Repeated arguments in one-variable functions  stand for products, e.g.,
$$(x,y;q)_\infty=(x;q)_\infty(y;q)_\infty,\qquad\Gampq(xy^\pm)=\Gampq(xy)\Gampq(x/y). $$

We will write
\begin{align}\notag\Lambda&=\Lambda^n=\{\lambda\in\mathbb N^n;\,\lambda_1\geq\dots\geq\lambda_n\geq 0\},\\
\label{pasa}\Lambda_\infty&=\Lambda_\infty^n=\{\lambda\in\mathbb Z^n;\,\lambda_1\geq\dots\geq\lambda_n\},\\
\notag\Lambda_0&=\Lambda_0^n=\{\lambda\in\mathbb N^n;\,\lambda_1\geq\dots\geq\lambda_n= 0\}.
\end{align}
On each of these sets, we define the dominance order $\lambda\leq \mu$ as
$$\lambda_1+\dots+\lambda_j\leq\mu_1+\dots+\mu_j,\qquad j=1,\dots, n,$$
 where we require that equality holds for $j=n$.

\section{Ruijsenaars operators and $Q$-operators}

The higher order Ruijsenaars operators are defined by
\begin{equation}\label{oru}D^{(k)}=\sum_{\substack{I\subseteq \{1,\dots,n\},\\ |I|=k}}\,\prod_{i\in I,j\notin I}\frac{\thp(tx_i/x_j)}{\thp(x_i/x_j)}\prod_{i\in I}T_{q,x_i}, \end{equation}
where $(T_{q,x}f)(x)=f(qx)$. 
It was proved by Ruijsenaars \cite{ru87} that
$$[D^{(k)},D^{(l)}]=0,\qquad 0\leq k,l\leq n.$$
The operators $D^{(k)}$ have  the kernel functions \cite{ru05}
\begin{equation}\label{rkf}K_c(\mathbf x;\mathbf y)=\prod_{1\leq i,j\leq n}\frac{\Gampq(cx_iy_j)}{\Gampq(ctx_iy_j)}, \end{equation}
that is,
\begin{equation}\label{rki}D^{(k)}_{\mathbf x}K_c=D^{(k)}_{\mathbf y}K_c. \end{equation}
Moreover, they are formally symmetric with respect to the pairing
\begin{equation}\label{rsp}\int_{\mathbb T^n}f(\mathbf x)g(\mathbf x^{-1})\prod_{1\leq i\neq j\leq n}\frac{\Gampq(tx_i/x_j)}{\Gampq(x_i/x_j)} \,|\dup \mathbf x|,\end{equation}
where
$$\mathbb T^n=\{\mathbf x\in\mathbb C^n;\,|x_1|=\dots=|x_n|=1\},$$
$$|\dup \mathbf x|=\prod_{j=1}^n\frac{\dup x_j}{2\pi\ti x_j}. $$
This suggests studying integral operators
\begin{equation}\label{tnq}f\mapsto \int_{\mathbb T^n}f(\mathbf x)
\prod_{1\leq i,j\leq n}\frac{\Gampq(cy_j/x_i)}{\Gampq(cty_j/x_i)}
\prod_{1\leq i\neq j\leq n}\frac{\Gampq(tx_i/x_j)}{\Gampq(x_i/x_j)} \,|\dup \mathbf x|. \end{equation}

We will focus on a refined version of the operators \eqref{tnq}, where the variables are constrained by the balancing condition
\begin{equation}\label{bxy}x_1\dotsm x_n=y_1\dotsm y_n.\end{equation} 
More precisely, let
$$\mathbb T^{n-1}_r=\{\mathbf x\in\mathbb C^n;\,|x_1|=\dots=|x_n|,\,x_1\dotsm x_{n}=r\}. $$
We then define
\begin{equation}\label{rqo}(Q_cf)(\mathbf y)=\int_{\mathbb T^{n-1}_{y_1\dotsm y_n}}f(\mathbf x)
\prod_{1\leq i,j\leq n}\frac{\Gampq(cy_j/x_i)}{\Gampq(cty_j/x_i)}
\prod_{1\leq i\neq j\leq n}\frac{\Gampq(tx_i/x_j)}{\Gampq(x_i/x_j)} \,|\dup \mathbf x|, \end{equation}
where now
$$|\dup \mathbf x|=\prod_{j=1}^{n-1}\frac{\dup x_j}{2\pi\ti x_j}.$$

The kernel function in \eqref{rqo} has poles when $x_i\in p^{\mathbb 
N}q^{\mathbb N}cy_j$ or $x_i\in p^{-\mathbb N-1}q^{-\mathbb N-1}cty_j$. 
We require that the first type of poles are inside the circle $|x_i|=|y_1\dotsm y_n|^{1/n}$ and the second type are outside.
This leads to the condition
$$|pq/t|<\left|\frac{cy_j}{(y_1\dotsm y_n)^{1/n}}\right|<1,\qquad j=1,\dots,n.$$
In particular, if we also assume $|y_1|=\dots=|y_n|$, then it is enough to require that
$$|pq/t|<|c|<1. $$
We can relax these conditions by deforming the domain of integration, see the proof of Proposition \ref{qhp}. However, 
 for simplicity we focus on the situation when we can integrate over a torus.

By  the same argument as in \eqref{hqc},  $[D^{(k)},Q_c]=0$. We provide some details in the Appendix. However, 
our main interest is in the identity $[Q_c,Q_d]=0$. To formulate our main result, we note that
$$(Q_cQ_df)(\mathbf y)=\int_{\mathbb T^{n-1}_{y_1\dotsm y_n}}f(\mathbf x)K_{cd}(\mathbf x;\mathbf y) \prod_{1\leq i\neq j\leq n}\frac{\Gampq(tx_i/x_j)}{\Gampq(x_i/x_j)}\,|\dup \mathbf x|,  $$
where
$$K_{cd}(\mathbf x;\mathbf y)=\int_{\mathbb T^{n-1}_{y_1\dotsm y_n}} \prod_{1\leq i\neq j\leq n}\frac{\Gampq(tz_i/z_j)}{\Gampq(z_i/z_j)} \prod_{i,j=1}^n\frac{\Gampq(cy_j/z_i,dz_i/x_j)}{\Gampq(cty_j/z_i,dtz_i/x_j)}\,|\dup\mathbf z|. $$

\begin{theorem}\label{mt}
Assume that  $|p|<1$, $|q|<1$, $|t|<1$, that $x_1\dotsm x_n=y_1\dotsm y_n=r^n$ and that 
\begin{equation}\label{xyc}|pq/t|<|cr/x_j|,\,|dr/x_j|,\,|cy_j/r|,\,|dy_j/r|<1,\qquad 1\leq j\leq n.\end{equation}
Then, the   identity $[Q_c,Q_d]=0$ holds in the sense that $K_{cd}(\mathbf x;\mathbf y)=K_{dc}(\mathbf x;\mathbf y)$.
\end{theorem}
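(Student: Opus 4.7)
My plan is to pursue the eigenfunction strategy suggested by the heuristic formula \eqref{kex}: produce a simultaneous eigenbasis for the family $\{Q_c\}_c$, conclude commutativity on this basis, and then promote the operator identity to the pointwise kernel identity $K_{cd}(\mathbf x;\mathbf y) = K_{dc}(\mathbf x;\mathbf y)$.

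The first step combines the relation $[D^{(k)},Q_c]=0$ (which follows from \eqref{rki} by the computation in \eqref{hqc}, as detailed in the Appendix) with the work of Langmann, Noumi and Shiraishi: there exist joint eigenfunctions $P_\lambda(\mathbf x;p)\in\mathbb C[x_1^{\pm},\dots,x_n^{\pm}][[p]]$ of all the Ruijsenaars operators, indexed by partitions $\lambda$ (restricted by the balancing condition \eqref{bxy} in our setting), whose $p=0$ specialisation is the ordinary Macdonald polynomial. Since the Macdonald eigenvalues of $D^{(k)}$ already separate partitions for generic $(q,t)$, the joint spectrum of the $D^{(k)}$ on this space is simple, so $[D^{(k)},Q_c]=0$ forces $Q_c P_\lambda = a_\lambda(c)\,P_\lambda$ for some scalar $a_\lambda(c)$. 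Then $Q_c Q_d P_\lambda = a_\lambda(c)a_\lambda(d) P_\lambda$ is manifestly symmetric in $(c,d)$, so $[Q_c,Q_d]P_\lambda=0$ for every $\lambda$.

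The second step is to verify the claim $Q_c P_\lambda = a_\lambda(c)\, P_\lambda$ by direct computation. At $p=0$ the integral in \eqref{rqo} degenerates to a $q$-Selberg type integral against a Macdonald polynomial, which should be evaluable in closed product form, identifying the leading term of $a_\lambda(c)$; higher orders in $p$ then follow recursively using the triangularity of $P_\lambda$ with respect to the dominance order on $\Lambda_\infty$. As a byproduct one reads off the elliptic Macdonald expansion of $K_c$ announced as Corollary \ref{ecc}.

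Finally, the operator identity must be converted to the kernel identity. Unwrapping the composition $Q_c Q_d - Q_d Q_c$ shows that $\int P_\lambda(\mathbf x)\bigl(K_{cd}(\mathbf x;\mathbf y) - K_{dc}(\mathbf x;\mathbf y)\bigr)\prod_{i\neq j}\frac{\Gampq(tx_i/x_j)}{\Gampq(x_i/x_j)}\,|\dup\mathbf x| = 0$ for every $\lambda$ and $\mathbf y$. Under \eqref{xyc} the kernel $K_{cd}-K_{dc}$ is holomorphic in $\mathbf x$ on an open annulus containing $\mathbb T^{n-1}_{r^n}$ and admits a formal $p$-expansion with balanced Laurent-polynomial coefficients; since the $P_\lambda|_{p=0}$ span these Laurent polynomials and the Macdonald pairing is non-degenerate, each coefficient vanishes and hence $K_{cd}=K_{dc}$ identically. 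The main obstacle I anticipate is Step 2: reconciling the integral definition of $Q_c$, whose contour is dictated by the interlaced pole condition following \eqref{rqo}, with the formal-power-series representative of $P_\lambda$ so that the eigenvalue relation really holds term-by-term in $p$, and extracting a usable formula for $a_\lambda(c)$ from the degenerate $p=0$ evaluation.
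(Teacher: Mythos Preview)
Your proposal is correct and follows essentially the paper's first proof (\S\ref{fps}): show that $Q_c$ preserves $V=\mathbb C[\mathbf x^\pm]^{\mathrm S_n}[[p]]$ (your anticipated obstacle, handled in Lemmas~\ref{pintl} and~\ref{qvl} via the Cauchy identity and Macdonald orthogonality rather than a closed product evaluation), deduce $Q_c\mathbf P_\lambda=A_\lambda(p)\mathbf P_\lambda$ from $[D^{(k)},Q_c]=0$ and the simplicity of the Macdonald spectrum (Corollary~\ref{qec}), and then pass from the operator identity on $V$ to the kernel identity by elementary Fourier analysis. The paper also supplies a second, independent proof in \S\ref{sps} using Rains's transformation \eqref{rait} for elliptic hypergeometric integrals.
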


As was mentioned in the introduction, an equivalent identity  was conjectured by Gadde et al.\ \cite[(Eq.\ D.2)]{grry}. To see this equivalence, one needs to apply \eqref{ginv} and use that, by \cite[(Eq.\ 2.14)]{grry}, their parameter $t$ (different from our $t$) satisfies $t^6=pq$.

Theorem \ref{mt} implies commutativity for the operators \eqref{tnq} and, more generally, for operators of the form
$$
(Q_{c,\phi}f)(\mathbf y)=\int_{\mathbb T^n}f(\mathbf x)\phi\left(\frac{x_1\dotsm x_n}{y_1\dotsm y_n}\right)
\prod_{1\leq i,j\leq n}\frac{\Gampq(cy_j/x_i)}{\Gampq(cty_j/x_i)}
\prod_{1\leq i\neq j\leq n}\frac{\Gampq(tx_i/x_j)}{\Gampq(x_i/x_j)} \,|\dup \mathbf x|,
$$
where $\phi$ is an integrable function. 

\begin{corollary}
Assume that $|p|<1$, $|q|<1$, $|t|<1$, $|pq/t|<|c|<1$ and $|pq/t|<|d|<1$. Then, the  integral kernel 
of the commutator $[Q_{c,\phi},Q_{d,\psi}]$ vanishes for $\mathbf x,\,\mathbf y\in\mathbb T^n$.
\end{corollary}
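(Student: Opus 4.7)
The plan is to reduce the identity $K^{12}=K^{21}$, where $K^{12}$ and $K^{21}$ denote the integral kernels of $Q_{c,\phi}Q_{d,\psi}$ and $Q_{d,\psi}Q_{c,\phi}$, to Theorem~\ref{mt} by slicing the $\mathbf z$-integration over $\mathbb T^n$ along the fibres of the map $\mathbf z\mapsto Z:=z_1\cdots z_n$. Setting $X=x_1\cdots x_n$ and $Y=y_1\cdots y_n$, the weight $\psi(X/Z)\phi(Z/Y)$ appearing in $K^{12}$ depends on $\mathbf z$ only through $Z$, so the $\mathbf z$-integral factors as an outer integral over $Z\in\mathbb T$ and an inner integral over $\mathbb T^{n-1}_Z$. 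Writing $\widetilde K_{cd}(\mathbf x;\mathbf y;Z)$ for the inner integral (an unbalanced analogue of the kernel $K_{cd}$ from the statement of Theorem~\ref{mt}), one obtains
\[
K^{12}(\mathbf x;\mathbf y) = \prod_{i\neq j}\frac{\Gampq(tx_i/x_j)}{\Gampq(x_i/x_j)} \oint_{|Z|=1} \psi(X/Z)\,\phi(Z/Y)\,\widetilde K_{cd}(\mathbf x;\mathbf y;Z)\,\frac{\dup Z}{2\pi\ti Z},
\]
and the analogous expression for $K^{21}$ with $\phi\leftrightarrow\psi$ and $c\leftrightarrow d$ exchanged.

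The core step will be to establish the identity
\[
\widetilde K_{cd}(\mathbf x;\mathbf y;Z)=\widetilde K_{dc}(\mathbf x;\mathbf y;XY/Z).
\]
For this I pick unimodular $n$-th roots $\lambda$, $\mu$, $\nu$ of $X$, $Y$, $Z$ and substitute $\mathbf x=\lambda\mathbf x'$, $\mathbf y=\mu\mathbf y'$, $\mathbf z=\nu\mathbf z'$. After the substitution, the fibre $\mathbb T^{n-1}_Z$ becomes $\mathbb T^{n-1}_1$, the balancing condition $x'_1\cdots x'_n=y'_1\cdots y'_n=1$ holds, and the parameters transform as $c\mapsto c\mu/\nu$ and $d\mapsto d\nu/\lambda$. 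Since $|\lambda|=|\mu|=|\nu|=1$, the pole-separation conditions \eqref{xyc} are preserved, and Theorem~\ref{mt} applies to give the $c\leftrightarrow d$ symmetry of the rescaled integrand. Undoing the $\mathbf x,\mathbf y$-scalings on the swapped side, and then rescaling $\mathbf z$ once more to push the fibre back from radius $1$ to radius $XY/Z$, produces the displayed identity.

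With this identity in hand, I substitute $Z\mapsto XY/Z$ in the outer integral for $K^{12}$. Since $|X|=|Y|=1$, the contour $|Z|=1$ is preserved; the pair $\psi(X/Z)\phi(Z/Y)$ is transformed into $\psi(Z/Y)\phi(X/Z)$; and the sign introduced in $\dup Z/Z$ is compensated by the reversal of orientation. Applying the identity above then identifies the integrand with that of $K^{21}$, giving $K^{12}=K^{21}$, which is exactly the vanishing of the commutator kernel.

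The main technical obstacle will be the bookkeeping in the scaling reduction to Theorem~\ref{mt}: several simultaneous changes of variable in $\mathbf x$, $\mathbf y$, $\mathbf z$ must be tracked and combined in the correct order, and one has to verify both that the admissibility conditions of Theorem~\ref{mt} persist at every stage (which follows at once from unimodularity of all scaling factors) and that the final identity is independent of the ambiguity in the choice of $n$-th roots $\lambda,\mu,\nu$.
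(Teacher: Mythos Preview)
Your proposal is correct and follows essentially the same approach as the paper: both slice the $\mathbb T^n$-integration over $\mathbf z$ along the fibres of $\mathbf z\mapsto Z$, rescale the variables so that Theorem~\ref{mt} becomes applicable to the inner integral, and then perform the substitution $Z\mapsto XY/Z$ in the outer integral. The only cosmetic difference is that the paper rescales $\mathbf x$ and $\mathbf y$ to have product $Z$ (leaving $\mathbf z$ alone), thereby recognising the inner integral directly as $K_{cY^{1/n}/Z^{1/n},\,dZ^{1/n}/X^{1/n}}(\xi;\eta)$, whereas you rescale all three sets of variables to have product~$1$; the two normalisations are equivalent.
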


\begin{proof}
The integral kernel of $Q_{c,\phi}Q_{d,\psi}$ is
$$\int_{\mathbb T^{n}}\phi(Z/Y)\psi(X/Z)
 \prod_{1\leq i\neq j\leq n}\frac{\Gampq(tz_i/z_j)}{\Gampq(z_i/z_j)} \prod_{i,j=1}^n\frac{\Gampq(cy_j/z_i,dz_i/x_j)}{\Gampq(cty_j/z_i,dtz_i/x_j)}\,|\dup\mathbf z|,$$
 where we use notation such as $X=x_1\dotsm x_n$. We split this as an outer integral over $Z\in\mathbb T$ and and inner integral over $\mathbf z\in\mathbb  T^n_{Z}$.
 We also  introduce the variables $\xi_j=x_jZ^{1/n}/X^{1/n}$, $\eta_j=y_jZ^{1/n}/Y^{1/n}$, so that
 $\xi_1\dotsm\xi_n=\eta_1\dotsm\eta_n=Z$. The kernel then takes the form
 $$\int_{\mathbb T} \phi(Z/Y)\psi(X/Z)K_{cY^{1/n}/Z^{1/n},dZ^{1/n}/X^{1/n}}(\xi;\eta)|\dup Z|.$$
Making the change of variables $Z\mapsto XY/Z$ and applying Theorem \ref{mt}, it follows that the integral kernel is symmetric under interchanging $c$ and $d$.
\end{proof}

\section{First proof of commutativity}
\label{fps}

Our first proof of Theorem \ref{mt} is based on the elliptic Macdonald polynomials introduced in \cite{lns}. We first
 recall some standard facts about the Macdonald polynomials $P_\lambda(\mathbf x)=P_\lambda(\mathbf x;q,t)$, where $\mathbf x=(x_1,\dots,x_n)$. They are originally defined for partitions $\lambda$, but using the  property 
\begin{equation}\label{msp}P_{\lambda+(m)^n}(x_1,\dots,x_n)=(x_1\dotsm x_n)^m P_\lambda(x_1,\dots,x_n) \end{equation}
 one can extend the definition  to
$\lambda\in\Lambda_\infty$,  see \eqref{pasa}.
Under our standing assumption $|q|,\,|t|<1$, they satisfy the  orthogonality relations
\begin{equation}\label{mor}\int_{\mathbb T^n}P_\lambda(\mathbf x)P_\mu(\mathbf x^{-1})\prod_{1\leq i\neq j\leq n}\frac{(x_i/x_j;q)_\infty}{(tx_i/x_j;q)_\infty}\,|\dup \mathbf x|= N_\lambda\delta_{\lambda,\mu},\qquad \lambda,\,\mu\in\Lambda_\infty, \end{equation}
and the Cauchy identity 
\begin{equation}\label{ci}\prod_{i,j=1}^n\frac{(tx_iy_j;q)_\infty}{(x_iy_j;q)_\infty}=\sum_{\lambda\in\Lambda}b_\lambda P_\lambda(\mathbf x)P_\lambda(\mathbf y),\end{equation}
where $N_\lambda$ and $b_\lambda$ are explicitly known constants and the series \eqref{ci} converges for $\max_{i,j}(|x_iy_j|)\leq 1$.

We will need the following  consequence of 
\eqref{mor} and \eqref{ci}.

\begin{lemma}\label{pintl}
Let $\lambda\in\Lambda_\infty$ and assume that
 $|cy_j/(y_1\dotsm y_n)^{1/n}|<1$ for
   $1\leq j\leq n$. Then,
   \begin{equation}\label{pinti}\int_{\mathbb T^{n-1}_{y_1\dotsm y_n}}P_\lambda(\mathbf x)
\prod_{i,j=1}^n\frac{(cty_j/x_i;q)_\infty}{(cy_j/x_i;q)_\infty} \prod_{1\leq i\neq j\leq n}\frac{(x_i/x_j;q)_\infty}{(tx_i/x_j;q)_\infty}
 \,|\dup \mathbf x|=\phi_\lambda(c)P_\lambda(\mathbf y), \end{equation}
   where $\phi_\lambda$ is analytic in $|c|<1$. 
\end{lemma}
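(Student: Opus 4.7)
The plan is to expand the Cauchy-like factor in \eqref{pinti} via the Cauchy identity \eqref{ci} and then reduce to the full-torus orthogonality \eqref{mor} through a Fourier decomposition along the fibres $\mathbb T^{n-1}_s$.

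Under our hypothesis, \eqref{ci} (applied with $\mathbf x\mapsto\mathbf x^{-1}$ and $\mathbf y\mapsto c\mathbf y$) gives
$$\prod_{i,j=1}^n\frac{(cty_j/x_i;q)_\infty}{(cy_j/x_i;q)_\infty}=\sum_{\mu\in\Lambda}b_\mu\,c^{|\mu|}\,P_\mu(\mathbf x^{-1})P_\mu(\mathbf y),$$
uniformly on the integration torus. Swapping sum and integral, the left-hand side of \eqref{pinti} equals
$$\sum_{\mu\in\Lambda}b_\mu\,c^{|\mu|}\,P_\mu(\mathbf y)\,G_{\lambda,\mu}(y_1\dotsm y_n),\quad G_{\lambda,\mu}(s):=\int_{\mathbb T^{n-1}_s}P_\lambda(\mathbf x)P_\mu(\mathbf x^{-1})\prod_{1\leq i\neq j\leq n}\frac{(x_i/x_j;q)_\infty}{(tx_i/x_j;q)_\infty}\,|\dup\mathbf x|.$$

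The key step is to evaluate $G_{\lambda,\mu}$ by computing its Fourier coefficients in $s\in\mathbb T$ (the general value then follows from the homogeneity $G_{\lambda,\mu}(\alpha s)=\alpha^{(|\lambda|-|\mu|)/n}G_{\lambda,\mu}(s)$ under $\mathbf x\mapsto\alpha^{1/n}\mathbf x$). Fibering $\mathbb T^n$ over $\mathbb T$ via $\mathbf x\mapsto x_1\dotsm x_n$ and using \eqref{msp} to rewrite $(x_1\dotsm x_n)^{-k}P_\lambda(\mathbf x)=P_{\lambda-(k)^n}(\mathbf x)$, the $k$-th Fourier coefficient of $G_{\lambda,\mu}$ collapses, by \eqref{mor}, to $N_\lambda\,\delta_{\lambda-(k)^n,\mu}$ (noting that $N$ is invariant under the shift $\lambda\mapsto\lambda+(m)^n$ via \eqref{msp}). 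Writing $\lambda=\lambda^0+(\lambda_n)^n$ and $\mu=\mu^0+(\mu_n)^n$ with $\lambda^0,\mu^0\in\Lambda_0$, the delta enforces $\lambda^0=\mu^0$ together with $k=\lambda_n-\mu_n$, giving
$$G_{\lambda,\mu}(s)=N_\lambda\,s^{\lambda_n-\mu_n}\,\delta_{\lambda^0,\mu^0}.$$

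Substituting back, only $\mu=\lambda^0+(k)^n$ with $k\geq 0$ survive, and applying \eqref{msp} once more to $P_\mu(\mathbf y)$ and $P_\lambda(\mathbf y)$ the powers of $y_1\dotsm y_n$ telescope to produce $\phi_\lambda(c)P_\lambda(\mathbf y)$ with $\phi_\lambda(c)=N_\lambda\sum_{k\geq 0}b_{\lambda^0+(k)^n}c^{|\lambda^0|+nk}$, depending only on $c$. For analyticity on $|c|<1$, I would specialise $\mathbf y$ to a point of the equal-modulus slice $\{|y_j|=|y_1\dotsm y_n|^{1/n}\text{ for all }j\}$ at which $P_\lambda(\mathbf y)\neq 0$; the hypothesis of the lemma then collapses to $|c|<1$ and the integral in \eqref{pinti} is manifestly holomorphic in $c$ throughout this disk, whence so is $\phi_\lambda$. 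The main technical obstacle is the Fourier evaluation of $G_{\lambda,\mu}$, whose decisive ingredient is the shift identity $(x_1\dotsm x_n)^{-k}P_\lambda=P_{\lambda-(k)^n}$ coming from \eqref{msp}; the remaining steps are routine bookkeeping.
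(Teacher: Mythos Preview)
Your proof is correct and follows essentially the same route as the paper: expand via the Cauchy identity \eqref{ci}, reduce to the constrained-torus integrals $G_{\lambda,\mu}(s)$, and evaluate these by fibering $\mathbb T^n$ over $\mathbb T$ and invoking the shift property \eqref{msp} together with the orthogonality \eqref{mor}. The only cosmetic difference is that the paper first pulls out the scaling factor $(y_1\dotsm y_n)^{(|\lambda|-|\mu|)/n}$ by the change of variables $x_j\mapsto (y_1\dotsm y_n)^{1/n}x_j$ and then identifies the remaining $\mathbb T^{n-1}$-integral inside \eqref{mor}, whereas you compute the Fourier coefficients of $G_{\lambda,\mu}$ directly; the resulting formula $\phi_\lambda(c)=N_\lambda\sum_{k\geq 0}b_{\lambda^0+(k)^n}c^{|\lambda^0|+nk}$ agrees with the paper's $N_\lambda\sum_{k\geq -\lambda_n}b_{\lambda+(k)^n}c^{|\lambda|+kn}$ after the index shift $k\mapsto k-\lambda_n$.
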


\begin{proof}
It is clear that the left-hand side of \eqref{pinti}
is analytic in $c$ under the stated assumption. 
It follows from \eqref{ci} that it has the Taylor expansion
$$\sum_{m=0}^\infty c^m\sum_{\mu\in\Lambda,\,|\mu|= m}b_\mu  P_\mu(\mathbf y)
\int_{\mathbb T^{n-1}_{y_1\dotsm y_n}}P_\lambda(\mathbf x)
P_\mu(\mathbf x^{-1}) \prod_{1\leq i\neq j\leq n}\frac{(x_i/x_j;q)_\infty}{(tx_i/x_j;q)_\infty}
 \,|\dup \mathbf x|.$$
 Making the change of
 variables $x_j\mapsto(y_1\dotsm y_n)^{1/n}x_j$,
this can be written
\begin{equation}\label{cie}\sum_{m=0}^\infty c^m\sum_{\mu\in\Lambda,\,|\mu|= m}b_\mu  P_\mu(\mathbf y)(y_1\dotsm y_n)^{(|\lambda|-|\mu|)/n} I_{\lambda,\mu},\end{equation}
where 
$$I_{\lambda,\mu}=\int_{\mathbb T^{n-1}}P_\lambda(\mathbf x)
P_\mu(\mathbf x^{-1}) \prod_{1\leq i\neq j\leq n}\frac{(x_i/x_j;q)_\infty}{(tx_i/x_j;q)_\infty}
 \,|\dup \mathbf x|.$$
Since the result must be independent of the choice of the root $(y_1\dotsm y_n)^{1/n}$,  $I_{\lambda,\mu}$ vanishes unless  $|\lambda|-|\mu|\in n\mathbb Z$. 
Assuming now that $|\mu|=|\lambda|+kn$, $k\in\mathbb Z$, consider \eqref{mor} with $\mu\mapsto\mu-(k)^n$. 
It can be written 
$$N_{\lambda}\delta_{\lambda,\mu-(k)^n}=
\int_{|s|=1}\int_{\mathbf x\in\mathbb T^{n-1}_s}P_\lambda(\mathbf x)P_{\mu-(k)^n}(\mathbf x^{-1})\prod_{1\leq i\neq j\leq n}\frac{(x_i/x_j;q)_\infty}{(tx_i/x_j;q)_\infty}\,|\dup \mathbf x|\,|\dup s|.$$
Making as before the change of variables
$x_j\mapsto s^{1/n}x_j$ and using \eqref{msp},
the integration in $s$ becomes trivial and 
the inner integral reduces to $I_{\lambda,\mu}$.
Hence, for fixed $m=|\lambda|+kn$, only the 
term $\mu=\lambda+(k)^n$ contributes to the 
inner sum in \eqref{cie}. The condition $\mu\in\Lambda$
 gives $k\geq -\lambda_n$.
We conclude that \eqref{pinti} holds with
$$\phi_\lambda(c)=N_\lambda\sum_{k=-\lambda_n}^\infty b_{\lambda+(k)^n}c^{|\lambda|+kn}. $$
This must converge under the stated assumption on $c$, which reduces to $|c|<1$ in the case when $|y_1|=\dots=|y_n|$.
\end{proof}

Although we will not need it, we mention that  the function $\phi_\lambda$ can be expressed in terms of the basic hypergeometric series ${}_n\phi_{n-1}$.

We  now recall some relevant results from
 \cite{lns}.
Consider the  space 
$$V=\mathbb C[x_1^\pm,\dots,x_n^\pm]^{\mathrm S_n}[[p]].$$
The Ruijsenaars operators $D^{(k)}$ act on $V$ and have joint eigenfunctions
$(\mathbf P_\lambda)_{\lambda\in\Lambda_\infty}$ of the form
\begin{equation}\label{gmx}\mathbf P_\lambda(\mathbf x;p)=\sum_{k=0}^\infty\sum_{\substack{\mu\in \Lambda_\infty,\\\mu\leq\lambda+k\phi}} C_{\lambda\mu}^{(k)}\,m_\mu(\mathbf x) p^k,\end{equation}
where
\begin{equation}\label{diag}C_{\lambda\lambda}^{(k)}=\delta_{k,0}. \end{equation}
Here, $m_\mu$ are the monomial symmetric Laurent polynomials, that is, $m_\mu(\mathbf x)$ is the sum of all distinct permutations of $x_1^{\mu_1}\dotsm x_n^{\mu_n}$, and  $\phi=(1,0,\dots,0,-1)$. It is easy to check that  $\mu\leq\lambda+k\phi$  implies $\lambda_n-k\leq\mu_j\leq\lambda_1+k$ for all $j$. 
In particular, the sum over $\mu$ in \eqref{gmx} has finite support, so that indeed $\mathbf P_\lambda\in V$. 
The constant term    $\mathbf P_\lambda(\mathbf x;0)$ equals the  Macdonald polynomial $P_\lambda(\mathbf x)$. We refer to $\mathbf P_\lambda$ as \emph{elliptic Macdonald polynomials}.
The authors of \cite{lns} prove that, under certain conditions, the series \eqref{gmx} is convergent. However, for our purposes it suffices to consider it as a formal series.


\begin{lemma}\label{sbl}
The elliptic Macdonald polynomials $(\mathbf P_\lambda)_{\lambda\in \Lambda_\infty}$
form a Schauder basis for $V$ as a module over $\mathbb C[[p]]$. That is, every element in $V$ can be written as
\begin{equation}\label{cpe}\sum_{\lambda\in \Lambda_\infty}A_\lambda(p) \mathbf P_\lambda(\mathbf x;p)  \end{equation} 
for unique $A_\lambda\in\mathbb C[[p]]$, where the sum is convergent as a formal power series. 
\end{lemma}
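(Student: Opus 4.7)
The plan is to determine the coefficients $A_\lambda(p)$ by induction on powers of $p$, using two ingredients. First, by \eqref{diag} one has $\mathbf P_\lambda(\mathbf x;0)=P_\lambda(\mathbf x)$. Second, $\{P_\lambda\}_{\lambda\in\Lambda_\infty}$ is a $\mathbb C$-basis of the ring $R=\mathbb C[x_1^\pm,\dots,x_n^\pm]^{\mathrm S_n}$; this follows from the standard fact that $\{P_\mu\}_{\mu\in\Lambda}$ is a basis of the polynomial subring, combined with the shift \eqref{msp}.

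Writing $f=\sum_{k\geq 0}f_k(\mathbf x) p^k$ with $f_k\in R$, I would seek $A_\lambda(p)=\sum_{j\geq 0}A_\lambda^{(j)}p^j$. At order $p^0$ we require $\sum_\lambda A_\lambda^{(0)}P_\lambda=f_0$, which uniquely determines the $A_\lambda^{(0)}$, only finitely many of which are nonzero. Inductively, suppose $A_\lambda^{(j)}$ has been determined for all $j<k$ and that $\{\lambda:A_\lambda^{(j)}\neq 0\}$ is finite for each such $j$. Equating $p^k$-coefficients in the desired identity yields
$$\sum_\lambda A_\lambda^{(k)}P_\lambda \;=\; f_k \;-\; \sum_{\substack{j+l=k\\ l>0}}\sum_\lambda A_\lambda^{(j)}[\mathbf P_\lambda]_l,$$
where $[\mathbf P_\lambda]_l\in R$ denotes the $p^l$-coefficient of $\mathbf P_\lambda$, a finite combination of monomial symmetric Laurent polynomials by \eqref{gmx}. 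By the inductive hypothesis the second sum on the right is finite, so the entire right-hand side lies in $R$; expanding it in the $P_\lambda$-basis produces unique $A_\lambda^{(k)}$ with finite support, completing the induction.

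This simultaneously establishes existence and uniqueness of the expansion. Formal-power-series convergence of $\sum_\lambda A_\lambda(p)\mathbf P_\lambda(\mathbf x;p)$ is then automatic: for any fixed $k$, the $p^k$-coefficient equals $\sum_{j+l=k}\sum_\lambda A_\lambda^{(j)}[\mathbf P_\lambda]_l$, a finite sum by the same argument. The only point requiring care is this finiteness bookkeeping—that only finitely many $\lambda$ contribute at each order in $p$—which is what allows the infinite indexing set $\Lambda_\infty$ to cause no trouble. No single step presents a serious obstacle; the argument is essentially a triangular inversion with respect to the $p$-adic filtration, leveraging the fact that the leading coefficients $\mathbf P_\lambda(\mathbf x;0)=P_\lambda(\mathbf x)$ already form a basis of $R$.
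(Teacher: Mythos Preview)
Your argument is correct and takes a closely related but somewhat more direct route than the paper's. Both proofs exploit the same underlying fact---that $\mathbf P_\lambda$ reduces modulo $p$ to a known basis of $R$---to perform a triangular inversion with respect to the $p$-adic filtration. The paper proceeds by writing each $m_\lambda$ as $\mathbf P_\lambda$ minus a correction and then iterating this substitution; to show the iteration terminates at each fixed $p$-order it invokes the dominance order (the set of $\mu<\lambda$ in $\Lambda_\infty$ with $|\mu|=|\lambda|$ is finite). Your order-by-order induction sidesteps this combinatorial step entirely, since you only need that $\{P_\lambda\}$ is a basis of $R$ and that every correction term carries a strictly positive power of $p$. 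One minor point: the identification $\mathbf P_\lambda(\mathbf x;0)=P_\lambda(\mathbf x)$ is stated in the paper as a fact from \cite{lns}, not as a consequence of \eqref{diag} alone (which only fixes the leading $m_\lambda$-coefficient); but this does not affect your argument, which would go through just as well using any basis of $R$ obtained as $\{\mathbf P_\lambda(\mathbf x;0)\}_{\lambda\in\Lambda_\infty}$.
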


\begin{proof}
It is clear that $(m_\lambda)_{\lambda\in \Lambda_\infty}$ form a Schauder basis for $V$. By \eqref{diag}, we can write 
\eqref{gmx} as
$$m_\lambda(\mathbf x)=\mathbf P_\lambda(\mathbf x;p)-\sum_{k=0}^\infty\sum_{\mu\neq\lambda}
 C_{\lambda,\mu}^{(k)}\, m_\mu(\mathbf x) p^k.$$
We claim that iterating this relation
gives an expansion of $m_\lambda$ in $\mathbf  P_\mu$, which converges as a formal power series. 
If that was not the case, we could iterate it indefinitely and still keep the exponent of $p$ bounded. 
We would then have to pick $k=0$ in all except finitely many steps. In the remaining steps, we would apply it with $\mu<\lambda$ in the dominance order. As we saw above, the set of such $\mu\in\Lambda_\infty$ is finite, so the process must terminate. Hence, we end up with an expansion
$$m_\lambda(\mathbf x)=\sum_{k=0}^\infty\sum_{\mu\in\Lambda_\infty}
 D_{\lambda,\mu}^{(k)}\,\mathbf  P_\mu(\mathbf x;p) p^k,$$
where the inner sum has finite support. This proves the existence of the expansion \eqref{cpe}. 
For uniqueness, assume that \eqref{cpe} equals zero. After dividing by a power of $p$, we may assume that $A_\lambda(0)\neq 0$ for at least one $\lambda$. Then the uniqueness part follows from the linear independence of Macdonald polynomials.
\end{proof}

When $f\in V$, we define $Q_cf$ by applying \eqref{rqo} termwise to the formal power series expansion.

\begin{lemma}\label{qvl}  The operators $Q_c$ preserve the space $V$.\end{lemma}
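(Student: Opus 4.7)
The plan is to expand the kernel of $Q_c$ as a formal power series in $p$ and show that each $p$-coefficient of $Q_cf$ is a symmetric Laurent polynomial in $\mathbf{y}$ whenever $f\in V$. The base case $p=0$ is handled by Lemma~\ref{pintl}; the higher orders are reduced to it by a symmetrisation followed by a finite Macdonald-basis expansion.

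First I would analyse the $p$-expansion of $\Gampq$ itself. From the defining infinite product \eqref{gpg}, one can extract the $j=0$ part of the denominator to write $\Gampq(z)=(z;q)_\infty^{-1}\Gamma^{*}(z;p,q)$, where $\Gamma^{*}(z;p,q)$ is a formal power series in $p$ starting at $1$. Since the remaining factors are of the form $(p^{j+1}q/z;q)_\infty$ or $(p^{j+1}z;q)_\infty^{-1}$ with $j\geq 0$, only finitely many of them contribute modulo any fixed power $p^{k+1}$, and applying the $q$-binomial theorem to each such factor shows that the coefficient of $p^k$ in $\Gamma^{*}(z;p,q)$ is a Laurent polynomial in $z$ (with scalar coefficients rational in $q$). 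Taking ratios and products, the entire kernel of $Q_c$ factorises as
\[ K(\mathbf{x},\mathbf{y};p)=K_0(\mathbf{x},\mathbf{y})\sum_{k\geq 0} K_k^{*}(\mathbf{x},\mathbf{y})\,p^k, \]
where $K_0$ is precisely the Macdonald-type kernel appearing in Lemma~\ref{pintl}, $K_0^{*}=1$, and each $K_k^{*}$ is a Laurent polynomial jointly in $\mathbf{x}$ and $\mathbf{y}$.

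Writing $f(\mathbf{x};p)=\sum_k f_k(\mathbf{x})p^k$, the coefficient of $p^m$ in $(Q_cf)(\mathbf{y})$ is then the \emph{finite} sum
\[ \sum_{k+k'=m}\int_{\mathbb{T}^{n-1}_{y_1\cdots y_n}} f_k(\mathbf{x})K_{k'}^{*}(\mathbf{x},\mathbf{y})K_0(\mathbf{x},\mathbf{y})\,|\dup\mathbf{x}|, \]
each integrand of the shape $g(\mathbf{x},\mathbf{y})K_0(\mathbf{x},\mathbf{y})$ with $g$ a Laurent polynomial in $\mathbf{x},\mathbf{y}$. Since $K_0$ and the measure on $\mathbb{T}^{n-1}_{y_1\cdots y_n}$ are invariant under the $S_n$-action permuting the $x_j$, I may replace $g$ by its $\mathbf{x}$-symmetrisation $g^{\mathrm{sym}}$, a symmetric Laurent polynomial in $\mathbf{x}$ whose coefficients are Laurent polynomials in $\mathbf{y}$. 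Expanding $g^{\mathrm{sym}}$ in the Macdonald basis yields a \emph{finite} linear combination $\sum_{\lambda}a_\lambda(\mathbf{y})P_\lambda(\mathbf{x};q,t)$, and applying Lemma~\ref{pintl} termwise produces $\sum_{\lambda}a_\lambda(\mathbf{y})\phi_\lambda(c)P_\lambda(\mathbf{y};q,t)$, which is again a Laurent polynomial in $\mathbf{y}$.

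Symmetry of $Q_cf$ under $S_n$ acting on $\mathbf{y}$ is automatic from the manifest $\mathbf{y}$-symmetry of $K$ and the measure, so each $p$-coefficient of $Q_cf$ lies in $\mathbb{C}[y_1^{\pm},\dots,y_n^{\pm}]^{S_n}$ and therefore $Q_cf\in V$. The chief technical hurdle is the verification that extracting the $(z;q)_\infty^{-1}$ denominator from $\Gampq(z)$ genuinely leaves Laurent-polynomial coefficients at every order of $p$; this polynomiality is exactly what makes the Macdonald expansion of $g^{\mathrm{sym}}$ finite at each order, letting Lemma~\ref{pintl} close the argument without introducing any convergence issue beyond those already built into its hypotheses.
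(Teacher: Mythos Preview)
Your argument is correct and follows essentially the same route as the paper: factor the kernel as the $p=0$ Macdonald kernel times a formal $p$-series with Laurent-polynomial coefficients, then reduce to Lemma~\ref{pintl} via a finite Macdonald expansion. The only difference is cosmetic: because the full kernel and $K_0$ are each manifestly $S_n$-invariant in $\mathbf{x}$, the ratio $\sum_k K_k^{*}p^k$ already has $\mathbf{x}$-symmetric coefficients, so your symmetrisation step is redundant (though harmless).
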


\begin{proof}
We need to prove that if $f$ is a symmetric Laurent polynomial, then $Q_cf$ is a power series in $p$, whose coefficients are again symmetric Laurent polynomials.
 It is clear from \eqref{gpg} that 
$$\Gampq(x)=\frac 1{(x;q)_\infty}\sum_{k=0}^\infty\phi_k(x)p^k,$$
where $\phi_k\in\mathbb C[x^\pm]$. It follows that the integrand in \eqref{rqo} can be expanded as
$$\prod_{1\leq i\neq j\leq n}\frac{(x_i/x_j;q)_\infty}{(tx_i/x_j;q)_\infty}
\prod_{i,j=1}^n\frac{(cty_j/x_i;q)_\infty}{(cy_j/x_i;q)_\infty}\cdot\Phi(\mathbf x;\mathbf y;p),
 $$
where $\Phi$ is a formal power series in $p$ whose coefficients are Laurent polynomials, separately symmetric in $\mathbf x$ and in $\mathbf y$.
Hence, it is enough to show that if $f$ is a symmetric Laurent polynomial, then so is
$$ \int_{\mathbb T^{n-1}_{y_1\dotsm y_n}}f(\mathbf x)\prod_{1\leq i\neq j\leq n}\frac{(x_i/x_j;q)_\infty}{(tx_i/x_j;q)_\infty}
\prod_{i,j=1}^n\frac{(cty_j/x_i;q)_\infty}{(cy_j/x_i;q)_\infty} 
 \,|\dup \mathbf x|,$$
considered as a function of $\mathbf y$.
It suffices to take
 $f=P_\lambda$, for  $\lambda\in\Lambda_\infty$.
The result then follows from Lemma \ref{pintl}.
       \end{proof}

  It follows from the computation in the Appendix that $[D^{(k)},Q_c]f=0$ for $f\in\mathbb C[\mathbf x^{\pm}]$. Since both $D^{(k)}$ and $Q_c$ are defined to act termwise on formal power series, this relation extends to $f\in V$. 
We will use it to prove that the elliptic Macdonald polynomials are joint eigenfunctions of the $Q$-operators.
 
 \begin{corollary}\label{qec}
 We have
 $$Q_c\mathbf  P_\lambda(\mathbf x;p)=A_\lambda(p)\mathbf P_\lambda(\mathbf x;p),\qquad \lambda\in \Lambda,$$
 for some $A_\lambda\in\mathbb C[[p]]$. In particular, $[Q_c,Q_d]=0$ as an operator identity on $V$.
 \end{corollary}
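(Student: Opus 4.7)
The strategy is to combine the Schauder basis from Lemma \ref{sbl}, the preservation of $V$ by $Q_c$ from Lemma \ref{qvl}, and the joint diagonalization of the Ruijsenaars operators, exploiting simplicity of the Macdonald joint spectrum at $p=0$.

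First, by Lemma \ref{qvl}, $Q_c\mathbf P_\lambda$ lies in $V$, so Lemma \ref{sbl} yields a unique Schauder expansion
$$Q_c\mathbf P_\lambda=\sum_{\mu\in\Lambda_\infty}A_{\lambda\mu}(p)\,\mathbf P_\mu,\qquad A_{\lambda\mu}\in\mathbb C[[p]].$$
Denote by $\epsilon_\lambda^{(k)}(p)\in\mathbb C[[p]]$ the eigenvalue of $D^{(k)}$ on $\mathbf P_\lambda$. Applying $D^{(k)}$ to both sides and using $[D^{(k)},Q_c]=0$ on $V$ (which, as noted in the excerpt, follows from the appendix computation extended termwise in $p$), the uniqueness part of Lemma \ref{sbl} forces
$$A_{\lambda\mu}(p)\bigl(\epsilon_\lambda^{(k)}(p)-\epsilon_\mu^{(k)}(p)\bigr)=0$$
for every $k=1,\ldots,n$ and every $\mu\in\Lambda_\infty$.

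The crucial step is simplicity of the joint spectrum. At $p=0$, $\mathbf P_\lambda(\mathbf x;0)=P_\lambda(\mathbf x)$ and the eigenvalues reduce to the classical Macdonald values $e_k(t^{n-1}q^{\lambda_1},t^{n-2}q^{\lambda_2},\ldots,q^{\lambda_n})$. For generic $(q,t)$, the $n$-tuple of these values (for $k=1,\dots,n$) determines the multiset $\{t^{n-i}q^{\lambda_i}\}$ and hence $\lambda$; so for $\mu\neq\lambda$ some difference $\epsilon_\lambda^{(k)}(0)-\epsilon_\mu^{(k)}(0)$ is nonzero, making $\epsilon_\lambda^{(k)}(p)-\epsilon_\mu^{(k)}(p)$ a unit in $\mathbb C[[p]]$ and forcing $A_{\lambda\mu}(p)=0$. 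For non-generic $(q,t)$ where Macdonald eigenvalues collide, the conclusion extends by analytic continuation, since the coefficients $A_{\lambda\mu}(p)$ depend analytically on $(q,t)$ on the admissible domain and vanish on its dense open generic subset.

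Setting $A_\lambda:=A_{\lambda\lambda}$ yields $Q_c\mathbf P_\lambda=A_\lambda(p)\mathbf P_\lambda$. The restriction $\lambda\in\Lambda$ in the statement is inessential: since $x_1\cdots x_n$ is constant (equal to $y_1\cdots y_n$) on the integration contour $\mathbb T^{n-1}_{y_1\cdots y_n}$, the operator $Q_c$ is compatible with the shift $\mathbf P_{\lambda+(m)^n}=(x_1\cdots x_n)^m\mathbf P_\lambda$, so the diagonalization extends from $\Lambda$ to all of $\Lambda_\infty$. Then both $Q_c$ and $Q_d$ act diagonally in the Schauder basis, so $[Q_c,Q_d]$ kills every $\mathbf P_\lambda$ and hence all of $V$. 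The main obstacle is the generic-vs-special parameter analysis in Step~3, i.e., arguing that for every admissible $(q,t)$ the joint Macdonald spectrum separates $\Lambda_\infty$, which in degenerate cases requires the analyticity/continuity argument described above rather than a direct spectral separation.
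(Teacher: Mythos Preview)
Your argument follows the same route as the paper's proof: expand $Q_c\mathbf P_\lambda$ in the Schauder basis via Lemmas~\ref{sbl} and~\ref{qvl}, use $[D^{(k)},Q_c]=0$ on $V$ to obtain $A_{\lambda\mu}(p)\bigl(\varepsilon_\lambda^{(k)}(p)-\varepsilon_\mu^{(k)}(p)\bigr)=0$, and then invoke simplicity of the Macdonald joint spectrum at $p=0$ to force $A_{\lambda\mu}=0$ for $\mu\neq\lambda$. The only substantive difference lies in how you justify that simplicity. The paper cites \cite[Lemma~5.4]{noumi}, which shows that the joint eigenvalues $e_k(t^{n-1}q^{\lambda_1},\dots,q^{\lambda_n})$ separate $\Lambda_\infty$ for \emph{all} $|q|,|t|<1$; hence, for $\lambda\neq\mu$, some $\varepsilon_\lambda^{(k)}(0)-\varepsilon_\mu^{(k)}(0)$ is nonzero and the conclusion follows directly, with no genericity hypothesis. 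Your fallback analytic-continuation step is therefore unnecessary, and it also carries a hidden circularity: at parameter values where the Macdonald eigenvalues genuinely collide, the $\mathbf P_\lambda$ themselves (constructed in \cite{lns} by a triangular procedure that relies on precisely this spectral separation) need not be uniquely determined, so the coefficients $A_{\lambda\mu}$ you want to continue may fail to be defined there. In the present setting this does not bite, because no such collisions occur in the admissible domain; but the clean fix is simply to cite the lemma and drop the continuation argument.
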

 
 \begin{proof} It follows from Lemma \ref{sbl} and Lemma \ref{qvl} that we can write
$$Q_c \mathbf P_\lambda(\mathbf x;p)=\sum_{\mu\in \Lambda_\infty} A_{\lambda,\mu}(p)\mathbf P_\mu(\mathbf x;p),\qquad \lambda\in \Lambda_\infty, $$
with convergence as a formal power series. It is proved in \cite{lns} that
$D^{(k)}  \mathbf P_\lambda=\varepsilon_\lambda^{(k)}\mathbf  P_\lambda,$
 for some $\varepsilon_\lambda^{(k)}\in\mathbb C[[p]]$. 
 The constant terms $\varepsilon_\lambda^{(k)}(0)$ are the Macdonald eigenvalues
 $$ \varepsilon_\lambda^{(k)}(0)=e_k(t^{n-1}q^{\lambda_1},t^{n-2}q^{\lambda_2},\dots,q^{\lambda_n}),
 $$
 where $e_k$ denote elementary symmetric polynomials. 
 Identifying the coefficient of $\mathbf P_\mu$ in the identity $[D^{(k)},Q_c]\mathbf P_\lambda=0$  gives
 $$A_{\lambda,\mu}(p)\left(\varepsilon_\lambda^{(k)}(p)-\varepsilon_\mu^{(k)}(p)\right)=0.$$
 If $a_{\lambda,\mu}$ is the leading coefficient of $A_{\lambda,\mu}$, it follows that
 $$a_{\lambda,\mu}\left(\varepsilon_\lambda^{(k)}(0)-\varepsilon_\mu^{(k)}(0)\right)=0,\qquad 0\leq k\leq n.$$
 By \cite[Lemma 5.4]{noumi}, under our assumptions $|q|,\,|t|<1$, we can conclude that if $\lambda\neq \mu$ then $a_{\lambda,\mu}=0$.
 Hence, $A_{\lambda,\mu}$ is supported on $\lambda=\mu$. 
 \end{proof}

 We can now prove  that the integral kernel of $[Q_c,Q_d]$ vanishes.

 \begin{proof}[First proof of Theorem \ref{mt}]
 Let 
 $$K(\mathbf x;\mathbf y)=(K_{cd}(\mathbf x;\mathbf y)-K_{dc}(\mathbf x;\mathbf y))\prod_{1\leq i\leq j\leq n}
 \frac{\Gampq(tx_i/x_j)}{\Gampq(x_i/x_j)}.
 $$ 
By Corollary \ref{qec},
$$\int_{\mathbb T^{n-1}_{y_1\dotsm y_n}} K(\mathbf x;\mathbf y)f(\mathbf x)\,|\dup\mathbf x|=0$$
for any  symmetric Laurent polynomial $f$. 
It follows from elementary Fourier analysis that $K$ vanishes for $\mathbf x\in \mathbb T^{n-1}_{y_1\dotsm y_n}$. By analytic continuation,
it vanishes in the whole region \eqref{xyc}, assuming $x_1\dotsm x_n=y_1\dotsm y_n$.
 \end{proof}

 We can also give a version of \eqref{kex} for elliptic Macdonald polynomials.

 \begin{corollary}\label{ecc}
 Define $K^{(m)}$ by the Laurent expansion
\begin{equation}\label{kce}K_c(\mathbf x;\mathbf y)=\sum_{m=-\infty}^\infty
K^{(m)}(\mathbf x;\mathbf y)c^m\end{equation}
in the annulus
\begin{equation}\label{kcc}|pq/t|<|cx_iy_j|<1,\qquad 1\leq i,j\leq n.\end{equation}
Then, there exist $B_\lambda \in \mathbb C[[p]]$ such that
 \begin{equation}\label{kpe}K^{(m)}(\mathbf x;\mathbf y)=\sum_{\lambda\in\Lambda_\infty,\,|\lambda|=m}B_\lambda(p)\mathbf P_\lambda(\mathbf x;p)\mathbf P_\lambda(\mathbf y;p), \end{equation}
with convergence as a formal power series in $p$. 
 \end{corollary}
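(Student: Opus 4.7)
The plan is to combine the kernel-function identity \eqref{rki} with the Schauder basis property of Lemma \ref{sbl}, following the template of the proof of Corollary \ref{qec}. The first step, which I expect to be the main technical obstacle, is to verify that each Laurent coefficient $K^{(m)}(\mathbf x;\mathbf y)$ belongs to $V$ as a function of either $\mathbf x$ or $\mathbf y$, and is homogeneous of degree $m$ in each argument at every order of $p$. For this I would use the factorization
$$\Gampq(z)=\frac{(pq/z;q)_\infty}{(z;q)_\infty}\prod_{j\ge 1}\frac{(p^{j+1}q/z;q)_\infty}{(p^jz;q)_\infty},$$
in which the product on the right is a formal power series in $p$ whose coefficients are Laurent polynomials in $z$. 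Substituting this into $K_c$ and expanding the resulting $p=0$ Cauchy kernel $\prod_{i,j}(ctx_iy_j;q)_\infty/(cx_iy_j;q)_\infty$ via \eqref{ci} applied to $(c\mathbf x,\mathbf y)$, then collecting the coefficient of $c^m$, shows that at each order of $p$ the function $K^{(m)}$ is a finite $\mathbb C$-linear combination of products $P_\lambda(\mathbf x)P_\lambda(\mathbf y)$ times Laurent polynomials in $\mathbf x,\mathbf y$. The identity $K_c(s\mathbf x;\mathbf y)=K_{sc}(\mathbf x;\mathbf y)$ then yields the claimed homogeneity after comparing $c^m$-coefficients.

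With this structural result in hand, I would apply Lemma \ref{sbl} in the $\mathbf y$ variable to obtain
$$K^{(m)}(\mathbf x;\mathbf y)=\sum_{\mu\in\Lambda_\infty}C_\mu(\mathbf x;p)\,\mathbf P_\mu(\mathbf y;p),$$
where the coefficients $C_\mu(\mathbf x;p)$ again lie in $V$ as functions of $\mathbf x$ by linearity of the basis extraction. Taking Laurent coefficients in $c$ of the kernel-function identity \eqref{rki} gives $D^{(k)}_{\mathbf x}K^{(m)}=D^{(k)}_{\mathbf y}K^{(m)}$; applying $D^{(k)}_{\mathbf y}$ termwise on the right (using the eigenfunction property $D^{(k)}\mathbf P_\mu=\varepsilon_\mu^{(k)}(p)\mathbf P_\mu$ from \cite{lns}) and $D^{(k)}_{\mathbf x}$ on the coefficients, together with the uniqueness clause of Lemma \ref{sbl}, forces $D^{(k)}_{\mathbf x}C_\mu=\varepsilon_\mu^{(k)}(p)C_\mu$ for $0\le k\le n$. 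Rerunning the endgame of Corollary \ref{qec} — expand $C_\mu$ itself as $\sum_\nu C_\mu^\nu(p)\mathbf P_\nu(\mathbf x;p)$ and invoke \cite[Lemma 5.4]{noumi} on the separation of Macdonald eigenvalues for $|q|,|t|<1$ — yields $C_\mu(\mathbf x;p)=B_\mu(p)\mathbf P_\mu(\mathbf x;p)$ for some $B_\mu\in\mathbb C[[p]]$.

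It remains to restrict the sum to $|\mu|=m$. Each $\mathbf P_\mu$ is homogeneous of degree $|\mu|$, since the dominance condition $\nu\le\mu+k\phi$ in \eqref{gmx} with $|\phi|=0$ preserves $|\nu|=|\mu|$. Matching the homogeneous component of degree $m$ in $\mathbf x$ on both sides of the resulting expansion and using uniqueness in Lemma \ref{sbl} once more forces $B_\mu(p)=0$ whenever $|\mu|\ne m$, which is \eqref{kpe}. The only real difficulty is the initial structural step for $K^{(m)}$; once that bookkeeping is settled, the argument essentially duplicates that of Corollary \ref{qec}.
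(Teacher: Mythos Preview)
Your proposal is correct and follows essentially the same approach as the paper: establish that each $K^{(m)}$ lies in $V\otimes V$ and is homogeneous of degree $m$, pass the kernel identity \eqref{rki} to the Laurent coefficients, then use the Schauder basis of Lemma~\ref{sbl} together with the separation of Macdonald eigenvalues from \cite[Lemma~5.4]{noumi} to force diagonality. The only cosmetic differences are that the paper handles the structural step via the substitution $d=p/c$ (which packages the negative $c$-powers with positive $p$-powers in one stroke) rather than your gamma-factorization plus Cauchy expansion, and that the paper expands in both variables at once with the homogeneity restriction $|\lambda|=|\mu|=m$ imposed up front rather than at the end.
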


\begin{proof}
Let us temporarily write  $d=p/c$.  We can then expand
$$K_c=\prod_{i,j=1}^n\prod_{m=0}^\infty\frac{(c^{m+1}d^m  tx_iy_j,c^md^{m+1}q/x_iy_j;q)_\infty}{(c^{m+1}d^mx_iy_j,c^md^{m+1}q/tx_iy_j;q)_\infty}
=\sum_{k,l=0}^\infty\phi_{kl}(\mathbf x;\mathbf y)c^kd^l,$$
where $\phi_{k,l}$ are Laurent  polynomials and $|cx_iy_j|<1$, $|dq/tx_iy_j|<1$ for $1\leq i,j\leq n$. Equivalently, \eqref{kce} holds in the region \eqref{kcc}, with
$$K^{(m)}=\sum_{k=\max(0,-m)}^\infty\phi_{k,k+m}(\mathbf x;\mathbf y)p^k. $$
In particular, $K^{(m)}\in V\otimes V$.
It is clear that $K^{(m)}$ is homogeneous of degree
$m$ in $\mathbf x$ and in $\mathbf y$.
By
Lemma \ref{sbl}, it follows that
$$K^{(m)}(\mathbf x;\mathbf y)=
\sum_{\lambda,\mu\in\Lambda_\infty,\,|\lambda|=|\mu|=m}B_{\lambda,\mu}(p)\mathbf P_\lambda(\mathbf x;p)\mathbf P_\mu(\mathbf y;p),
 $$
 for some $B_{\lambda,\mu}\in\mathbb C[[p]]$. 

 Assume that $p$ is close enough to zero, so that
   \eqref{kcc} holds also if  $x_i$ and $y_j$ are multiplied by $q$. 
 We can then apply the expansion \eqref{kce} to both sides of the kernel function identity \eqref{rki}.
    Identifying the coefficient of $c^m$  gives
$$D^{(k)}_{\mathbf x}K^{(m)}=D^{(k)}_{\mathbf y}K^{(m)}.$$
Since this holds analytically near $p=0$, it also holds  in $V\otimes V$. Hence,
$$ B_{\lambda,\mu}(p)\left(\varepsilon_\lambda^{(k)}(p)-\varepsilon_\mu^{(k)}(p)\right)=0.$$
As in the proof of Corollary \ref{qec}, it follows that $B_{\lambda,\mu}$ is supported on $\lambda=\mu$.
\end{proof}

 It is tempting to write Corollary \ref{ecc} as
 $$K_c(\mathbf x;\mathbf y)=\sum_{\lambda\in\Lambda_\infty}c^{|\lambda|}B_\lambda(p)\mathbf P_\lambda(\mathbf x;p)\mathbf P_\lambda(\mathbf y;p).$$
 However,  we have not investigated the analytic convergence of this series.

\begin{example}
As an example, we verify \eqref{kpe} for $n=2$ and $m=0$, up to terms of order $p$.
It is straight-forward to compute
\begin{align}
\notag K^{(0)}&=1+\frac{(1-t)^2q}{t(1-q)^2}\left(2+\frac{x_1}{x_2}+\frac{x_2}{x_1}\right)\left(2+\frac{y_1}{y_2}+\frac{y_2}{y_1}\right)p+\mathcal O(p^2)\\
\label{kme}&=m_{0,0}\otimes m_{0,0}+\frac{(1-t)^2q}{t(1-q)^2}\left(2m_{0,0}+m_{1,-1}\right)\otimes\left(2m_{0,0}+m_{1,-1}\right)p+\mathcal O(p^2).
\end{align}
It follows from \eqref{gmx} and \eqref{diag} that
\begin{subequations}\label{pm}
\begin{align}
\mathbf P_{0,0}&=m_{0,0}+\alpha m_{1,-1}p+\mathcal O(p^2),\\
\mathbf P_{1,-1}&=m_{1,-1}+\beta m_{0,0}+\mathcal O(p),
\end{align}
\end{subequations}
for some scalars $\alpha$, $\beta$. 
Plugging this into the eigenvalue equation $D^{(1)}\mathbf P_\lambda=\varepsilon_\lambda(p) \mathbf P_\lambda$,
where
\begin{multline*}D^{(1)}=\frac{tx_1-x_2}{x_1-x_2}\left(1+\frac{(1-t)(x_1^2t-x_2^2)}{tx_1x_2}\,p\right)T_{q,x_1}\\
+\frac{tx_2-x_1}{x_2-x_1}\left(1+\frac{(1-t)(x_2^2t-x_1^2)}{tx_1x_2}\,p\right)T_{q,x_2}+\mathcal O(p^2), \end{multline*}
gives after a short computation
$$\alpha=\frac{(1-t)^2(1+t)q}{t(1-q)(1-t q)},\qquad \beta=\frac{(1-t)(1+q)}{1-tq}. $$
Inverting the relations \eqref{pm} gives
\begin{align*}
m_{0,0}&=\mathbf P_{0,0}+\alpha(\beta\mathbf P_{0,0}-\mathbf P_{1,-1})p+\mathcal O(p^2),\\
m_{1,-1}&=\mathbf P_{1,-1}-\beta\mathbf P_{0,0}+\mathcal O(p).
\end{align*}
Inserting these expressions into \eqref{kme} and simplifying we find that, in agreement with \eqref{kpe},
\begin{align*}K^{(0)}(\mathbf x;\mathbf y)&=\left(1+\frac{(t+1)(t-1)^2q(3tq+t-q-3)}{t(q-1)(tq-1)^2}\,p\right)\mathbf P_{0,0}(\mathbf x;p)\mathbf P_{0,0}(\mathbf y;p)\\
&\quad+\frac{(1-t)^2q}{t(1-q)^2}\,p\,\mathbf P_{1,-1}(\mathbf x;p)\mathbf P_{1,-1}(\mathbf y;p)
+\mathcal O(p^2).
 \end{align*}

\end{example}

\section{Second proof of commutativity}
\label{sps}

In our second proof, we work more directly with elliptic hypergeometric integrals. We will use the notation
\begin{equation}\label{inm}I_{n}^m(a_1,\dots,a_{m+n};b_1,\dots,b_{m+n})
=\kappa_n\int_{\mathbb T^{n-1}}\frac{\prod_{i=1}^n\prod_{j=1}^{n+m}\Gampq(a_jx_i,b_j/x_i)}{\prod_{1\leq i\neq j\leq n}\Gampq(x_i/x_j)}\,|\dup \mathbf x|,
 \end{equation}
 where $|a_j|<1$ and $|b_j|<1$ for $1\leq j\leq n+m$,
 $$a_1\dotsm a_{m+n}b_1\dotsm b_{m+n}=(pq)^m $$
 and 
 $$\kappa_n=\frac{(p;p)_\infty^{n-1}(q;q)_\infty^{n-1}}{n!}. $$
 We will also write
 \begin{equation}\label{jd}J_n(y_1,\dots,y_{2n};a,b)= \int_{\mathbb T^{n-1}}\frac{ \prod_{i=1}^n\prod_{j=1}^{2n}\Gampq(ax_iy_j,b/x_iy_j)}{\prod_{1\leq i\neq j\leq n}\Gampq(x_i/x_j,ab x_i/x_j)}\,|\dup\mathbf x|,\end{equation}
 where $|pq|<|ab|$, $|b|<|y_j|<|a^{-1}|$, $1\leq j\leq 2n$, and
 $$y_1\dotsm y_{2n}=1. $$
 The integral \eqref{inm}  is the $A$-type elliptic Dixon integral studied in \cite{rains10}.
 If we write  $t=pq/ab$, then \eqref{jd} contains the Selberg-type factor
 $$ \prod_{1\leq i\neq j\leq n}\frac{\Gampq(tx_i/x_j)}{\Gampq(x_i/x_j)}$$
 that we recognize from \eqref{rsp}. However, $J_n$ is different from the $A$-type elliptic Selberg integral studied in  \cite{s}.
 
 If $s^m=a_1\dotsm a_{n+m}=(pq)^m/b_1\dotsm b_{m+n}$
 and $|s|<|a_j|<1$, $|pq/s|<|b_j|<1$, $1\leq j\leq m+n$, then
one has the integral transformation \cite{rains10}
\begin{equation}\label{rait} I_{n}^m(\mathbf a;\mathbf b)
=\prod_{i,j=1}^{n+m}\Gampq(a_ib_j)\cdot I_{m}^n(pq/s \mathbf b;s/\mathbf a).\end{equation}
Although we will not need it, we mention that the
  parameter conditions can be relaxed at the expense of deforming the domain of integration.

It is easy to check that
$$K_{cd}(\mathbf x;\mathbf y)=J_n\left(\frac{1}{s x_1},\dots,\frac 1{s x_n},
\frac{s}{y_1},\dots,\frac{s}{y_n};s d, s c\right),
 $$
 where $s=\sqrt{pq/cdt}$. Hence,
the symmetry $K_{cd}=K_{dc}$ is equivalent to
\begin{equation}\label{jcs}J_n(y_1,\dots,y_{2n};a,b)=J_n(y_1,\dots,y_{2n};b,a). \end{equation}
Since the change of variables $x_j\mapsto x_j^{-1}$ 
gives the trivial symmetry
$$J_n(y_1,\dots,y_{2n};a,b)=J_n(y_1^{-1},\dots,y_{2n}^{-1};b,a), $$
 we can alternatively formulate 
\eqref{jcs} as
$$ J_n(y_1,\dots,y_{2n};a,b)=J_n(y_1^{-1},\dots,y_{2n}^{-1};a,b).$$
We stress that the relation $y_1\dotsm y_{2n}=1$ is essential for this to hold.

We will establish Theorem \ref{mt} by proving that
\begin{equation}\label{jft}\int_{\mathbb T^{2n-1}}J_n(\mathbf y;a,b)\phi(\mathbf y)\,|\dup\mathbf y|=\int_{\mathbb T^{2n-1}}J_n(\mathbf y;b,a)\phi(\mathbf y)\,|\dup\mathbf y| \end{equation}
for a sufficiently large class of test functions $\phi$.

\begin{lemma}\label{ftl}
Assume that
$abc^2d^2=pq$, 
 $|c|<|z_j|<|c^{-1}|$, $|d|<|w_j|<|d^{-1}|$, $j=1,\dots,n$. Then, 
 \eqref{jft} holds for 
$$\phi(\mathbf y)=\frac{\prod_{i=1}^{2n}\prod_{j=1}^n\Gampq(cz_j^{\pm}y_i,dw_j^\pm/y_i)}{\prod_{1\leq i\neq j\leq 2n}\Gampq(y_i/y_j)}.$$
\end{lemma}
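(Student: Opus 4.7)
The plan is to prove \eqref{jft} by substituting the definition of $J_n$, exchanging the order of integration, and applying the Rains transformation \eqref{rait} to the resulting inner Dixon integral.

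First I substitute \eqref{jd} for $J_n$ in the LHS of \eqref{jft}, writing the $J_n$-integration variable as $\mathbf v\in\mathbb T^{n-1}$ to avoid a conflict of notation. By Fubini, I interchange the order of integration so that the $\mathbf y$-integral is on the inside. For each fixed $\mathbf v$, the inner $\mathbf y$-integrand has pure Dixon form with parameters $\mathbf a=\{av_i\}_{i=1}^n\cup\{cz_k^\pm\}_{k=1}^n$ and $\mathbf b=\{b/v_i\}_{i=1}^n\cup\{dw_k^\pm\}_{k=1}^n$; the balancing $\prod a_i\prod b_j=(abc^2d^2)^n=(pq)^n$ follows from the hypothesis $abc^2d^2=pq$, so the inner integral matches $I_{2n}^n$ in the notation of \eqref{inm}.

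Next, I apply \eqref{rait} with $s^n=\prod a_i=a^n V c^{2n}$, where $V=v_1\dotsm v_n$, rewriting the $\mathbf y$-integral as the product of the prefactor $\prod\Gampq(a_ib_j)$ and a dual Dixon integral $I_n^{2n}$ over fresh variables $\mathbf u$. The prefactor decomposes as
\begin{equation*}
\Gampq(ab)^n\prod_{i\ne j}\Gampq(abv_i/v_j)\cdot\prod_{i,j=1}^n\Gampq(adv_iw_j^\pm)\Gampq(bcz_j^\pm/v_i)\Gampq(cdz_i^\pm w_j^\pm),
\end{equation*}
and the middle piece $\prod_{i\ne j}\Gampq(abv_i/v_j)$ cancels exactly against the Selberg-type factor in the denominator of $J_n$, converting the outer $\mathbf v$-integrand to pure Dixon form. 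Carrying out the same manipulation on the RHS (with $a,b$ exchanged), the factor $\Gampq(ab)^n\prod\Gampq(cdz_i^\pm w_j^\pm)$ is common to both sides, so \eqref{jft} reduces to the equality of two $(\mathbf v,\mathbf u)$-double integrals that differ only in how $a$ and $b$ enter the remaining prefactor $\prod\Gampq(adv_iw_j^\pm,bcz_j^\pm/v_i)$ and the new $\mathbf u$-integrand.

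To close the argument, I apply \eqref{rait} a second time to the $\mathbf u$-integral (a computation shows $\tilde s=s$, so this second transformation is dual to the first) and combine with the involutions $\mathbf v\mapsto \mathbf v^{-1}$ and $\mathbf u\mapsto \mathbf u^{-1}$. The $z_j\mapsto z_j^{-1}$ and $w_j\mapsto w_j^{-1}$ symmetries built into $\phi$ are exactly what allow the residual prefactor to be rearranged into its $a\leftrightarrow b$ counterpart under these involutions. The hypotheses $|c|<|z_j|<|c^{-1}|$, $|d|<|w_j|<|d^{-1}|$ ensure that the pole separations required by \eqref{rait} are preserved throughout. The main obstacle is precisely this final reconciliation step: the parameter $s=ac^2V^{1/n}$ introduces $V^{1/n}$ factors into the new Dixon integrand that interact with the reduced-torus constraint $u_1\dotsm u_n=1$, and one must track the rescaling of $\mathbf u$ carefully so that, together with the second application of Rains, the asymmetric appearance of $a$ and $b$ is absorbed.
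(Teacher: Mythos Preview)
Your setup through the first application of \eqref{rait} is correct and matches the paper's argument exactly: Fubini to bring the $\mathbf y$-integral inside, recognise it as $I_{2n}^n(a\mathbf v,c\mathbf z^\pm;b/\mathbf v,d\mathbf w^\pm)$, apply Rains, and observe that the piece $\prod_{i\neq j}\Gampq(ab\,v_i/v_j)$ of the prefactor cancels the Selberg factor from $J_n$. Two things, however, go wrong after that.

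First, the obstacle you flag about $V^{1/n}$ is illusory. The $J_n$-integration is over $\mathbb T^{n-1}$, which is the product-one torus, so $V=v_1\dotsm v_n=1$ identically. Hence $s=ac^2$ is a genuine constant, and the dual integral $I_n^{2n}(c^2/\mathbf v,ac\mathbf z^\pm;d^2\mathbf v,bd\mathbf w^\pm)$ lives on the standard torus with no rescaling of $\mathbf u$ to track.

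Second, and more seriously, your closing move is circular. The transformation \eqref{rait} is an involution: applying it to $I_n^{2n}(c^2/\mathbf v,ac\mathbf z^\pm;d^2\mathbf v,bd\mathbf w^\pm)$ returns $I_{2n}^n(b/\mathbf v,d\mathbf w^\pm;a\mathbf v,c\mathbf z^\pm)$, which by the trivial symmetry $x\mapsto x^{-1}$ equals the original inner integral, and the two gamma prefactors cancel via $\Gampq(x)\Gampq(pq/x)=1$. So a second Rains simply undoes the first, and the torus involutions $\mathbf v\mapsto\mathbf v^{-1}$, $\mathbf u\mapsto\mathbf u^{-1}$ alone do not swap $a$ and $b$ (the test function $\phi$ is not invariant under $\mathbf y\mapsto\mathbf y^{-1}$).

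The paper closes differently and without a second Rains. After the first transform one has
\[
L=\text{const}\cdot\int_{\mathbb T^{n-1}}\frac{\prod_{i,j}\Gampq(ad\,v_iw_j^\pm,\,bc\,z_j^\pm/v_i)}{\prod_{i\neq j}\Gampq(v_i/v_j)}\,I_n^{2n}\big(c^2/\mathbf v,ac\mathbf z^\pm;d^2\mathbf v,bd\mathbf w^\pm\big)\,|\dup\mathbf v|.
\]
Now apply Fubini \emph{again}, making $\mathbf u$ the outer variable. The new inner $\mathbf v$-integral is itself a Dixon integral $I_n^{2n}(d^2/\mathbf u,ad\mathbf w^\pm;c^2\mathbf u,bc\mathbf z^\pm)$, and the new outer integrand carries the prefactor $\prod\Gampq(ac\,u_iz_j^\pm,\,bd\,w_j^\pm/u_i)$. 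The substitution $u_j\mapsto v_j^{-1}$ in this outer integral then reproduces the displayed expression with $a$ and $b$ interchanged, which is exactly the right-hand side of \eqref{jft}. The missing ingredient in your sketch is this second exchange of integration order; once you insert it, only a single change of variables is needed and the argument closes immediately.
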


\begin{proof}
Let $L$ denote the left-hand side of \eqref{jft}. 
Changing the order of integration gives
$$L
=\frac 1{\kappa_{2n}}\int_{\mathbb T^{n-1}} \frac{I_{2n}^n(a\mathbf x,c\mathbf z^\pm;b/\mathbf x,d\mathbf w^\pm)}{\prod_{1\leq i\neq j\leq n}\Gampq(x_i/x_j,abx_i/x_j)}\,
\,|\dup\mathbf x|.
 $$
The parameters are such that we can apply \eqref{rait}, with $s= ac^2$.  This gives
\begin{align}\nonumber L&=\frac 1{\kappa_{2n}}\prod_{i,j=1}^n\Gampq(cdw_i^{\pm}z_j^{\pm})
 \int_{\mathbb T^{n-1}} 
\frac{\prod_{i,j=1}^n\Gampq(adx_iw_j^\pm,bcx_i^{-1}z_j^\pm)}{\prod_{1\leq i\neq j\leq n}\Gampq(x_i/x_j)}\\
\label{ldi} &\quad\times  I_{n}^{2n}(c^2/\mathbf x,ac\mathbf z^\pm;d^2\mathbf x,bd\mathbf w^\pm)
\,|\dup\mathbf x|.\end{align}
Again changing the order of integration gives 
\begin{align*}L&=\frac 1{\kappa_{2n}}\prod_{i,j=1}^n\Gampq(cdw_i^{\pm}z_j^{\pm})
 \int_{\mathbb T^{n-1}} 
\frac{\prod_{i,j=1}^n\Gampq(acy_iz_j^\pm,bdy_i^{-1}w_j^\pm)}{\prod_{1\leq i\neq j\leq n}\Gampq(y_i/y_j)}\\
 &\quad\times  I_{n}^{2n}(d^2/\mathbf y,ad\mathbf w^\pm;c^2\mathbf y,bc\mathbf z^\pm)
\,|\dup\mathbf y|.\end{align*}
Replacing $y_j\mapsto x_j^{-1}$ gives \eqref{ldi} with $b$ and $a$ interchanged. This proves \eqref{jft}.
\end{proof}

\begin{proof}[Second proof of Theorem \ref{mt}]
In Lemma \ref{ftl}, we  make the change of parameters 
$$(a,b,c,d,p,q,\mathbf z,\mathbf w)\mapsto(r a,r b,rc,rd,r^3 p,r^3 q,r c\mathbf z,r^{-1} d^{-1} \mathbf w), $$
which  is consistent with the relation $abc^2d^2=pq$. After this change of variables,
\begin{equation}\label{wz}|w_j|<1<|z_j|,\qquad 1\leq j\leq n. \end{equation}

The integrand in \eqref{jd} has a formal power series
expansion
$$\frac{ \prod_{i=1}^n\prod_{j=1}^{2n}\Gamma_{r^3p,r^3q}(rax_iy_j,rb/x_iy_j)}{\prod_{1\leq i\neq j\leq n}\Gamma_{r^3p,r^3q}(x_i/x_j,r^2ab x_i/x_j)}=\Delta(\mathbf x)\Delta(\mathbf x^{-1})\sum_{k=0}^\infty r^k\phi_k(\mathbf x;\mathbf y),$$
where $\Delta(\mathbf x)=\prod_{1\leq i<j\leq n}(x_i-x_j)$ and each $\phi_k$ is a symmetric Laurent polynomial in $\mathbf x$ and in $\mathbf y$. Hence, we can write
$$J_n(\mathbf y;a,b)=\sum_{k=0}^\infty r^k\psi_k(\mathbf y), $$
where each $\psi_k$ is a symmetric Laurent polynomial. If 
 \eqref{jcs} fails analytically, it fails as a formal power series identity. That is, we can write
 \begin{equation}\label{jde}J_n(\mathbf y;a,b)-J_n(\mathbf y;b,a)=\psi(\mathbf y)r^{N}+\mathcal O(r^{N+1}) \end{equation}
 for some $N$ and some non-zero symmetric Laurent polynomial $\psi$. 
 
The formal power series expansion of $\phi$ has the form 
$$\phi(\mathbf y)=
\frac{\Delta(\mathbf y)\Delta(\mathbf y^{-1})}{\prod_{i=1}^{2n}\prod_{j=1}^{n}(1-y_i/z_j)(1-w_j/y_i)}
+\mathcal O(r).
  $$
 It then follows from \eqref{jft} that 
  $$\int_{\mathbf T^{2n-1}}
   \frac{\psi(\mathbf y)\Delta(\mathbf y)\Delta(\mathbf y^{-1})}{\prod_{i=1}^{2n}\prod_{j=1}^{n}(1-y_i/z_j)(1-w_j/y_i)}\,|\dup\mathbf y| 
   =0.$$
   We will show that this implies $\psi=0$, which contradicts our assumption
   that \eqref{jde} contains a non-zero term.

By \eqref{wz} and \cite[Eq.\ (I.4.2)]{m}, 
$$\prod_{i=1}^{2n}\prod_{j=1}^{n}\frac1{(1-y_i/z_j)(1-w_j/y_i)}=\sum_{\lambda,\,\mu\in\Lambda^n}
m_\lambda(\mathbf y)m_\mu(\mathbf y^{-1})
h_\lambda(\mathbf z^{-1})h_\mu(\mathbf w),
$$
where $m_\lambda$ and $h_\lambda$ denote, respectively, monomial and complete homogeneous symmetric polynomials.
It follows that
$$\int_{\mathbf T^{2n-1}}
   \psi(\mathbf y)\Delta(\mathbf y)\Delta(\mathbf y^{-1})m_\lambda(\mathbf y)m_\mu(\mathbf y^{-1})\,|\dup\mathbf y| 
   =0,\qquad \lambda,\mu\in\Lambda^n.  $$
Since $y_1\dotsm y_{2n}=1$, we can write
$m_\mu(\mathbf y^{-1})=m_{\tilde \mu}(\mathbf y)$,
 where 
 $$\tilde \mu=(\underbrace{k,\dots,k}_n,k-\mu_n,\dots,k-\mu_1) $$
and $k\in\mathbb Z$ is arbitrary. We now observe that 
any $\nu\in\Lambda^{2n}_\infty$ can be written $\nu=\lambda+\tilde\mu$ for $\lambda,\,\mu\in\Lambda^n$. For instance, we may take
\begin{align*}
\lambda&=(\nu_1-\nu_n,\nu_2-\nu_n,\dots,0),\\
\mu&=(\nu_n-\nu_{2n},\nu_n-\nu_{2n-1},\dots,\nu_n-\nu_{n+1}),\\
\tilde\mu&=(\nu_n,\dots,\nu_n,\nu_{n+1},\dots,\nu_{2n}).
\end{align*}
Then,
$$m_\lambda(\mathbf y)m_\mu(\mathbf y^{-1})= m_{\nu}(\mathbf y)+\sum_{\nu'\in\Lambda_\infty^{2n},\,\nu'<\nu} C_{\nu'} m_{\nu'}(\mathbf y).$$
Since $(m_\nu)_{\nu\in\Lambda^{2n}_\infty}$ form a basis for
the  symmetric Laurent polynomials in $\mathbf y$, it follows that 
$$\int_{\mathbf T^{2n-1}}
   \psi(\mathbf y)\Delta(\mathbf y)\Delta(\mathbf y^{-1})\phi(\mathbf y)\,|\dup\mathbf y| 
   =0  $$
 for any such polynomial $\phi$. This implies that $\psi$ vanishes identically.
\end{proof}

\section{Relation to Noumi--Sano operators}
\label{nss}

Noumi and Sano \cite{ns} introduced a family of difference operators that we will denote 
$$\tilde H^{(k)}=\sum_{\substack{\mu_1,\dots,\mu_n\geq 0,\\ \mu_1+\dots+\mu_n=k}}
\prod_{1\leq i<j\leq n}\frac{q^{\mu_j}\thp(q^{\mu_i-\mu_j}x_i/x_j)}{\thp(x_i/x_j)}\prod_{i,j=1}^n\frac{(tx_i/x_j;q,p)_{\mu_i}}{(qx_i/x_j;q,p)_{\mu_i}} \prod_{i=1}^n T_{q,x_i}^{\mu_i}.
 $$
 Here, the elliptic shifted factorials are defined by
 $$(x;q,p)_n=\frac{\Gampq (xq^n)}{\Gampq(x)}=\begin{cases}\thp(x)\thp(xq)\dotsm\thp(xq^{n-1}), & n\geq 0,\\
\big(\thp(xq^{-1})\thp(xq^{-2})\dotsm\thp(xq^{n})\big)^{-1}, & n<0.\end{cases} $$
The operators $\tilde H^{(k)}$ are polynomials in the Ruijsenaars operators $ D^{(k)}$, and vice versa. 
In particular, all these operators mutually commute. 
We will show that, in a certain sense, the Noumi--Sano operators are discrete degenerations of $Q$-operators.

It is well-known and easy to see that the Ruijsenaars operators \eqref{oru} satisfy
\begin{equation}\label{dgs}D^{(k)}=W^{-1} D^{(k)}\big|_{t\mapsto pq/t}W, \end{equation}
where
$$W(\mathbf x)=\prod_{1\leq i\neq j\leq n}\Gampq(tx_i/x_j). $$
Hence, the operators
$$H^{(k)}=(-1)^kq^{k(k+1)/2}t^{-kn}W^{-1}\tilde H^{(k)}\big|_{t\mapsto pq/t}W,$$
where the prefactor is chosen for convenience, are again polynomials in the Ruijsenaars operators.
Using the elementary identities
\begin{align*}
\prod_{1\leq i<j\leq n}\frac{q^{\mu_j}\thp(q^{\mu_i-\mu_j}x_i/x_j)}{\thp(x_i/x_j)}&=(-1)^{k(n-1)}q^{\frac{k(n-k-1)}2}
\frac{\prod_{j=1}^nq^{\frac{n\mu_j^2}2}x_j^{n\mu_j-k}}{\prod_{1\leq i\neq j\leq n}(x_i/x_j;q,p)_{\mu_i-\mu_j}}
,\\
 \prod_{i,j=1}^n(pqx_i/tx_j;q,p)_{\mu_i}&=(-tq^{-\frac12})^{kn}
\prod_{j=1}^nq^{-\frac{n\mu_j^2}2}x_j^{k-n\mu_j}\prod_{i,j=1}^n(qx_i/tx_j;q,p)_{\mu_i}
\end{align*}
it follows that
\begin{align*}H^{(k)}&=\sum_{\substack{\mu_1,\dots,\mu_n\geq 0,\\ \mu_1+\dots+\mu_n=k}}\prod_{1\leq i\neq j\leq n}
 \frac{(tx_i/x_j;q,p)_{\mu_i-\mu_j}}{(x_i/x_j;q,p)_{\mu_i-\mu_j}}
\prod_{i,j=1}^n\frac{(qx_i/tx_j;q,p)_{\mu_i}}{(qx_i/x_j;q,p)_{\mu_i}} \prod_{i=1}^n T_{q,x_i}^{\mu_i}.
\end{align*}

As we will see, 
the summand in $H^{(k)}$ can be viewed as a residue of the integral \eqref{rqo}, defining $Q_cf$, at the point $x_j=cq^{\mu_j}y_j$, $1\leq j\leq n$. The balancing condition \eqref{bxy} then formally implies $c=q^{-k/n}$, which is outside the domain of definition for $Q_c$. 
The following result gives a more precise statement.   We  formulate it for parameter conditions that are far from necessary, but make the proof reasonably simple.

\begin{proposition}\label{qhp}
Let  $f$ be a symmetric holomorphic function on $(\mathbb C\setminus\{0\})^n$ and let $k\in \mathbb N$. Assume that $|p|,\,|q|<1$
and $|p|<|q^{k-1}|$.  Fix a number $r$ with $1<r<|p^{-1}q^{k-1}|^{1/n}$ and assume that
$|pq|r<|t|<|q^k|r^{1-n}$. Finally, assume that $y_1,\dots,y_n$ satisfy
$$|y_i/y_j|< r,\qquad i\neq j,$$
 and are otherwise generic.  Then, the analytic continuation of $Q_cf$ in $c$  has at most a single pole at $c=q^{-k/n}$. Moreover
 $$\Res_{c=q^{-k/n}}(Q_cf)(\mathbf y)=-\frac{(n-1)!\,q^{-k/n}t^{kn}}{(p;p)_\infty^n(q;q)_\infty^n\Gampq(t)^n}(H^{(k)}f)(q^{-k/n}\mathbf y).$$
\end{proposition}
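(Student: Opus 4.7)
The plan is to analytically continue $(Q_c f)(\mathbf y)$ in $c$ by contour deformation, identify the pinched poles that produce the singularity at $c=q^{-k/n}$, and reorganize the resulting residue sum into the form of $(H^{(k)}f)(q^{-k/n}\mathbf y)$. I would begin by parametrizing the constrained torus $\mathbb T^{n-1}_{y_1\cdots y_n}$ using the free variables $x_1,\dots,x_{n-1}$, with $x_n=y_1\cdots y_n/(x_1\cdots x_{n-1})$ determined by the balancing condition. The $x$-singularities of the integrand fall into an ``inner'' family at $x_i\in cy_jq^{\mathbb N}p^{\mathbb N}$, coming from the poles of the $\Gampq(cy_j/x_i)$ factors, and an ``outer'' family at $x_i\in cty_jq^{-\mathbb N-1}p^{-\mathbb N-1}$, coming from the zeros of $\Gampq(cty_j/x_i)$. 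The hypotheses $|pq|r<|t|<|q^k|r^{1-n}$ and $|y_i/y_j|<r$ are arranged precisely so that, as $c$ is continued along a real-positive path from a small value to $q^{-k/n}$, no inner pole ever meets an outer one, and the only $q$-shifts of inner poles that cross the contour along the path are $q^\mu$ with $0\le\mu<k$.

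Second, I would perform the analytic continuation by picking up residues at each pole as it crosses the contour. For any injection $\sigma:\{1,\dots,n-1\}\to\{1,\dots,n\}$ and any $\mu\in\mathbb N^{n-1}$ with $\sum_j\mu_j\le k$, the iterated residue at $(x_1,\dots,x_{n-1})=(cq^{\mu_1}y_{\sigma(1)},\dots,cq^{\mu_{n-1}}y_{\sigma(n-1)})$ is meromorphic in $c$, and can acquire a further singularity only when the balancing-determined $x_n=y_{\sigma(n)}/(c^{n-1}q^{\mu_1+\cdots+\mu_{n-1}})$ itself lands on an inner pole, i.e.\ when $x_n=cq^{\mu_n}y_{\sigma(n)}$ for some $\mu_n\in\mathbb N$. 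Combined with the balancing relation this forces $c^nq^{|\mu|}=1$ with $|\mu|=k$, so the only singular value of $c$ is $q^{-k/n}$; at this value the extra factor $\Gampq(cy_{\sigma(n)}/x_n)$ develops the simple pole promised by the proposition.

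Third, I would compute the residue. Applying the shift relation $\Gampq(qx)=\thp(x)\Gampq(x)$ together with the standard identity $\Res_{x=1}\Gampq(x)=-1/((p;p)_\infty(q;q)_\infty)$ yields
\[
\Res_{x_j=cq^{\mu_j}y_{\sigma(j)}}\Gampq(cy_{\sigma(j)}/x_j)\,\frac{\dup x_j}{x_j}=\frac{1}{(p;p)_\infty(q;q)_\infty\,(q^{-\mu_j};q,p)_{\mu_j}},
\]
and an analogous calculation at $x_n=cq^{\mu_n}y_{\sigma(n)}$, combined with the $n$-fold Jacobian of the balancing constraint, produces the $(c-q^{-k/n})^{-1}$ pole together with the $q^{-k/n}$ scalar prefactor. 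Evaluating the remaining elliptic gamma factors at the pinched point $x_j=q^{-k/n+\mu_j}y_{\sigma(j)}$ and using $\Gampq(xq^m)/\Gampq(x)=(x;q,p)_m$ converts the ratios $\Gampq(cy_j/x_i)/\Gampq(cty_j/x_i)$ (for $i\ne j$) and $\Gampq(tx_i/x_j)/\Gampq(x_i/x_j)$ into the elliptic shifted factorials $(qx_i/tx_j;q,p)_{\mu_i}/(qx_i/x_j;q,p)_{\mu_i}$ and $(tx_i/x_j;q,p)_{\mu_i-\mu_j}/(x_i/x_j;q,p)_{\mu_i-\mu_j}$ appearing in $H^{(k)}$, while the diagonal $i=j$ factors in the first product supply the $\Gampq(t)^{-n}$ prefactor. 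The argument of $f$ at such a residue point is $(q^{-k/n}q^{\mu_j}y_{\sigma(j)})_j$, matching by symmetry the argument of the $\mu$-summand of $(H^{(k)}f)(q^{-k/n}\mathbf y)$.

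Finally, the $S_n$-symmetry of the integrand allows collapsing the joint sum over $(\sigma,\mu)$ and the $(n-1)!$ permutations of the integration variables into a single sum over compositions $\mu\in\mathbb N^n$ with $|\mu|=k$, with overall prefactor $(n-1)!$; careful tracking of powers of $q$ and $t$ through the elliptic shift relations then produces $t^{kn}$ and the overall sign. The hard part will be this combinatorial bookkeeping: every scalar in $-(n-1)!\,q^{-k/n}t^{kn}/((p;p)_\infty^n(q;q)_\infty^n\Gampq(t)^n)$ must be matched against a specific source in the residue computation, and each elliptic gamma ratio at the pinched point must be verified to coincide with the corresponding elliptic shifted factorial in $H^{(k)}$. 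The contour deformation itself is made routine by the given parameter assumptions, which are chosen precisely to prevent any interference between the two families of integrand poles along the continuation.
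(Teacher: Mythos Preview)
Your proposal is correct and follows essentially the same route as the paper: analytic continuation by contour deformation in the iterated integral over $x_1,\dots,x_{n-1}$ (with $x_n$ fixed by balancing), identification of the pinching inner poles $x_j=cy_{\sigma(j)}q^{\mu_j}$, the observation that a singularity in $c$ arises precisely when the balancing-determined $x_n$ hits an inner pole (forcing $c^nq^{|\mu|}=1$), and the $(n-1)!$ from summing over injections $\sigma$. The paper organizes the residue computation slightly differently---writing the integrand as $M(\mathbf x;\mathbf y)$ and factoring the residue at $\mathbf x=c\mathbf y q^{\bm\mu}$ through the ratio $M(\mathbf x q^{\bm\mu};\mathbf y)/M(\mathbf x;\mathbf y)$ evaluated at $\mathbf x=c\mathbf y$, which cleanly isolates the elliptic shifted factorials and the $\Gampq(t)^{-n}$ in one stroke---but this is a packaging choice, not a different idea. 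One small correction: there are $n!$ injections $\sigma:\{1,\dots,n-1\}\hookrightarrow\{1,\dots,n\}$, not $(n-1)!$; the extra factor of $1/n$ that reduces $n!$ to $(n-1)!$ comes from the Jacobian $d(c^n)/dc$ when converting the residue in $c$ at $c=q^{-k/n}$ to a residue in $x_n$.
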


\begin{proof}
Initially, we only assume that  $|p|<1$, $|q|<1$ and that  $y_1,\dots,y_n$ are generic in the sense that
\begin{equation}\label{ygen}\frac{y_1\dotsm y_n}{y_{j_1}\dotsm y_{j_n}}\neq c^n p^{\mathbb N} q^{\mathbb N},\qquad 1\leq j_1,\dots,j_n\leq n \end{equation}
(in particular, $c\neq q^{-k/n}$) and constrained to an annulus
\begin{equation}\label{yac}a<|y_j|<b,\qquad 1\leq j\leq n. \end{equation}
The other parameter constraints will arise during the proof.

We  write the integral kernel of $Q_c$ as
$$M(\mathbf x;\mathbf y)=\prod_{1\leq i,j\leq n}\frac{\Gampq(cy_j/x_i)}{\Gampq(cty_j/x_i)}
\prod_{1\leq i\neq j\leq n}\frac{\Gampq(tx_i/x_j)}{\Gampq(x_i/x_j)}. $$
We will construct the analytic continuation of $Q_cf$ as an iterated integral
\begin{equation}\label{qac}
\int_{x_1\in\mathcal C_1}\dotsm\int_{x_{n-1}\in\mathcal C_{n-1}} f(\mathbf x)M(\mathbf x;\mathbf y)
\,\prod_{i=1}^{n-1}\frac{\dup x_i}{2\pi\ti x_i},
\end{equation}
where $x_n$ is determined from the balancing condition \eqref{bxy}. 
We take $\mathcal C_i$ to be  positively oriented contours such that
\begin{align}\label{ip}cy_jp^{\mathbb N}q^{\mathbb N}&\subseteq \operatorname{Int}(\mathcal C_i),\qquad 1\leq j\leq n,\\
\label{ep}\frac{y_1\dotsm y_n}{y_{j_1}\dotsm y_{j_{n-i}}x_1\dotsm x_{i-1}c^{n-i}}p^{-\mathbb N}q^{-\mathbb N}&\subseteq \operatorname{Ext}(\mathcal C_i),\qquad  1\leq j_1,\dots,j_{n-i}\leq n. \end{align}
Moreover, we require that on the domain of integration 
\begin{equation}\label{cac}\frac{a^{n-1}}{b^{n-2}|c|^{n-1}}<|x_i|<b|c|,\qquad 1\leq i\leq n,\end{equation}
which makes sense if
\begin{subequations}\label{tc}
\begin{equation}\label{abc}a^{n-1}<b^{n-1}|c|^n. \end{equation}

We will show that $\mathcal C_i$ exist by induction on $i$. 
We first note that the points \eqref{ip} are all mutually distinct from the points \eqref{ep}.
When $i=1$ this follows from \eqref{ygen} and when $i>1$ from the condition \eqref{ep} for $\mathcal C_{i-1}$. 
To prove that $\mathcal C_i$ can be chosen so that \eqref{cac} holds, we need to verify that
the points \eqref{ip} are smaller than $b|c|$ in modulus and the points \eqref{ep} larger than $a^{n-1}/b^{n-2}|c|^{n-1}$. The first inequality is immediate from \eqref{yac}. For the second inequality we note that we can always cancel a factor $y_j$ in \eqref{ep}.  Hence, we can estimate
$$\left|\frac{y_1\dotsm y_n}{y_{m_1}\dotsm y_{m_{n-i}}x_1\dotsm x_{i-1}c^{n-i}}\right|>\frac{a^{n-1}}{b^{n-i-1}(b|c|)^{i-1}|c|^{n-i}}
=\frac{a^{n-1}}{b^{n-2}|c|^{n-1}}, $$ 
where in the case $i>1$ we used \eqref{cac} for $\mathcal C_{i-1}$. 
 For the final contour $\mathcal C_{n-1}$, we also need to ensure that \eqref{cac} holds with $i=n$. 
 However, if we let  $\mathcal C_n=(y_1\dotsm y_n/x_1\dotsm x_{n-2})\mathcal C_{n-1}^{-1}$ denote the  contour of integration for $x_n$, then  \eqref{ip} for $i=n-1$ is equivalent to \eqref{ep} for $i=n$ and vice versa. 
 Hence, the case $i=n$ of \eqref{cac} follows in the same way as $i=n-1$. 

We now consider the position of the singularities of the integrand. 
The factor $\Gampq(cy_j/x_i)$ has poles at the points \eqref{ip}, which are inside 
 $\mathcal C_i$
 for $1\leq i\leq n-1$.
As we saw in the preceding paragraph, they are also
 inside $\mathcal C_n$. 
The factor $\Gampq(cty_j/x_i)^{-1}$ has poles at $x_i=cty_jp^{-\mathbb N-1}q^{-\mathbb N-1}$. If 
\begin{equation}\label{tca}b|pq|<a|t|,\end{equation}
 it follows from
\eqref{yac} and \eqref{cac} that these are outside $\mathcal C_i$ for $1\leq i\leq n$. 
The factor $\Gampq(tx_i/x_j)$ is analytic for $|tx_i/x_j|<1$. In  \eqref{rqo}, the domain of integration is included in that region.
If we assume
\begin{equation}\label{tcb}b^{n-1}|c^nt|<a^{n-1}, \end{equation}
\end{subequations}
it follows from \eqref{cac} that this also true for \eqref{qac}. Finally, we note that
\begin{equation}\label{dtp}\frac 1{\prod_{1\leq i\neq j\leq n}\Gampq(x_i/x_j)}=\prod_{1\leq i<j\leq n}\thp(x_i/x_j)\theta_q(x_j/x_i) 
\end{equation}
is analytic on $(\mathbb C\setminus\{0\})^n$.
This shows that, under the parameter conditions \eqref{tc}, all singularities in \eqref{qac} are located in the same way relative to the domain of integration as in \eqref{rqo}. Hence, \eqref{qac} defines an analytic continuation of $Q_cf$. 

We now take $c$ in a small neighborhood of $q^{-k/n}$ and write $r=b/a$. Then, \eqref{abc} holds automatically, whereas \eqref{tca} and \eqref{tcb} reduce to $|pq|r<|t|<|q^k|r^{1-n}$, which implies $r^n<|p^{-1}q^{k-1}|$. 
That is, we recover the  conditions used in the formulation of the proposition. 

Next, we let $c\rightarrow q^{-k/n}$. Consider first the integration over $\mathcal C_1$.
In the limit, the points
$$S=\big\{cy_jq^\mu;\, 1\leq j\leq n,\, 0\leq \mu\leq k\big\} $$
are included both in \eqref{ip} and \eqref{ep} for $i=1$. Let $\mathcal D_1$ be a contour defined exactly as $\mathcal C_1$, except that 
$S\subseteq\operatorname{Ext}(\mathcal D_1)$. We can then write
$$ \int_{x\in\mathcal C_1}=\int_{x\in\mathcal D_1}+2\pi\ti\sum_{z\in S}\Res_{x=z}.$$
If $x_1\in\mathcal D_1$ and the parameters $y_j$ are generic, the remaining integrals are regular at $c=q^{-k/n}$. However, if $x_1\in S$, then some points in $S$ are included both in \eqref{ip} and \eqref{ep} for $i=2$.  Repeating this argument leads to the decomposition
\begin{equation}\label{qrs}Q_cf=\sum_{z_1,\dots,z_{n-1}\in S}\Res_{\mathbf x=\mathbf z} \frac{f(\mathbf x)M(\mathbf x;\mathbf y)}{
 \prod_{i=1}^{n-1} x_i}+R,\end{equation}
where $R$ is regular at $c=q^{-k/n}$ and $x_n$ is  determined from \eqref{bxy}. 

If we write $z_j=cy_{\sigma(j)}q^{\mu_j}$, $1\leq j\leq n-1$, then it follows from \eqref{dtp} that only terms with $\sigma$ injective contribute to the sum in \eqref{qrs}. We first  consider the case $\sigma=\id$.
Then, the potential singularity at $c=q^{-k/n}$ comes from the factor
$$\Gampq(cy_n/x_n)\Big|_{\mathbf x=\mathbf z}=\Gampq(cx_1\dotsm x_{n-1}/y_1\dotsm y_{n-1})\Big|_{x_j=cy_jq^{\mu_j}}
=\Gampq(c^nq^{k-\mu_n}), $$
where $\mu_n=k-(\mu_1+\dots+\mu_{n-1})$. This is singular at $c=q^{-k/n}$ if $\mu_n\geq 0$, and we observe that
$$\Res_{c=q^{-k/n}}\Gampq(c^nq^{k-\mu_n})
=-\frac{q^{-\frac kn}}{nx_n}\,\Res_{x_n=cy_nq^{\mu_n}}\Gampq(cy_n/x_n)\Big|_{c=q^{-k/n}}.
 $$
 Hence, the resulting contribution to  $\Res_{c=q^{-k/n}}(Q_cf)$ is
 $$-\frac{q^{-\frac kn}}{n}\sum_{\substack{\mu_1,\dots,\mu_n\geq 0,\\ \mu_1+\dots+\mu_n=k}} f(\mathbf y q^{\bm \mu-k/n})\Res_{\mathbf x=c\mathbf y q^{\bm \mu} }
  \frac{M(\mathbf x;\mathbf y)}{\prod_{i=1}^{n}x_i}\Bigg|_{c=q^{-k/n}}.
   $$
   Since this is symmetric in $\mathbf y$,
   the terms corresponding to $\sigma\neq\id$ each add an identical contribution and we conclude that
   $$\Res_{c=q^{-k/n}} Q_cf=-q^{-\frac kn}(n-1)!\sum_{\substack{\mu_1,\dots,\mu_n\geq 0,\\ \mu_1+\dots+\mu_n=k}}
    f(\mathbf y q^{\bm \mu-k/n})
   \Res_{\mathbf x=c\mathbf y q^{\bm \mu} }
 \frac{M(\mathbf x;\mathbf y)}{\prod_{i=1}^nx_i}\Bigg|_{c=q^{-k/n}}
.$$

To finish the computation, we write
$$\Res_{\mathbf x=c\mathbf y q^{\bm \mu} }
 \frac{M(\mathbf x;\mathbf y)}{\prod_{i=1}^nx_i}=
 \Res_{\mathbf x=c\mathbf y }\frac{M(\mathbf x q^{\bm \mu};\mathbf y)}{M(\mathbf x ;\mathbf y)}\cdot
 \frac{M(\mathbf x;\mathbf y)}{\prod_{i=1}^nx_i}.$$
Since
$$\frac{M(\mathbf x q^{\bm \mu};\mathbf y)}{M(\mathbf x ;\mathbf y)}
=t^{kn}\prod_{i,j=1}^n\frac{(qx_i/cty_j;q,p)_{\mu_i}}{(qx_i/cy_j;q,p)_{\mu_i}}\prod_{1\leq i\neq j\leq n}\frac{(tx_i/x_j;q,p)_{\mu_i-\mu_j}}{(x_i/x_j;q,p)_{\mu_i-\mu_j}},$$
 the first factor is regular at $\mathbf x=c\mathbf y$ and equals
 $$ t^{kn}\prod_{i,j=1}^n\frac{(qy_i/ty_j;q,p)_{\mu_i}}{(qy_i/y_j;q,p)_{\mu_i}}\prod_{1\leq i\neq j\leq n}\frac{(ty_i/y_j;q,p)_{\mu_i-\mu_j}}{(y_i/y_j;q,p)_{\mu_i-\mu_j}}.$$
Finally, we have
\begin{align*}
\Res_{\mathbf x=c\mathbf y }
 \frac{M(\mathbf x;\mathbf y)}{\prod_{i=1}^nx_i}&=
\Res_{\mathbf x=c\mathbf y }
\prod_{1\leq i\neq j\leq n}
\frac{\Gampq(cy_j/x_i,tx_i/x_j)}{\Gampq(cty_j/x_i,x_i/x_j)}
\prod_{i=1}^n\frac{\Gampq(cy_i/x_i)}{x_i\Gampq(cty_i/x_i)}
\\
&=\frac 1{\Gampq(t)^n}\prod_{i=1}^n\Res_{x_i=cy_i}\frac{\Gampq(cy_i/x_i)}{x_i}=\frac{1}{(p;p)_\infty^n(q;q)_\infty^n\Gampq(t)^n}.
\end{align*}
This completes the proof.
\end{proof}

\section{$Q$-operators and elliptic hypergeometric functions}
\label{cps}

In view of \eqref{dgs}, it is natural to consider the gauge transformed operators
$$\tilde Q_c=W^{-1}Q_c\big|_{t\mapsto pq/t}W $$
or, more explicitly,
$$
(\tilde Q_cf)(\mathbf y)=\int_{\mathbb T^{n-1}_{y_1\dotsm y_n}}f(\mathbf x)
\frac{\prod_{1\leq i,j\leq n}\Gampq(cy_j/x_i,tx_i/cy_j)}{\prod_{1\leq i\neq j\leq n}\Gampq(ty_i/y_j,x_i/x_j)}
 \,|\dup \mathbf x|.
$$
It  follows from Theorem \ref{mt} that $[\tilde Q_c,\tilde Q_d]=0$, but it is not a priori clear that $[Q_c,\tilde Q_d]=0$. It turns out that this
follows from the integral transformation \eqref{rait}. In the hyperbolic case, an analogous observation was made in \cite{b4}.

\begin{proposition}\label{qhqp}
Assume that $|p|<1$, $|q|<1$,  $x_1\dotsm x_n=y_1\dotsm y_n=r^n$ and, for $1\leq j\leq n$,
$$|pq/t|<|cr/x_j|,\,|cy_j/r|<1,\qquad |t|<|dr/x_j|,\,|dy_j/r|<1. $$
Then
the integral kernel of $[Q_c,\tilde Q_d]$, considered as a function of $\mathbf x$ and $\mathbf y$, vanishes identically.
\end{proposition}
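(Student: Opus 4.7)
The plan is to expand both compositions $Q_c\tilde Q_d$ and $\tilde Q_dQ_c$ as iterated integrals and reduce the identity $K^{Q\tilde Q}(\mathbf x;\mathbf y)=K^{\tilde Q Q}(\mathbf x;\mathbf y)$ to a direct application of the elliptic Dixon transformation \eqref{rait}. An analogous reduction was carried out in the hyperbolic setting in \cite{b4}.

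First, I would write $(Q_c\tilde Q_df)(\mathbf y)$ and $(\tilde Q_dQ_cf)(\mathbf y)$ as iterated integrals with intermediate variable $\mathbf w$. In each composition a welcome cancellation occurs: the factor $\prod_{i\neq j}\Gampq(tw_i/w_j)$ from the measure of one operator cancels against the same factor in the denominator of the other operator's kernel. Using the balancing $x_1\dotsm x_n=y_1\dotsm y_n=r^n$ forced by the nested integration domains, this yields
\begin{align*}
K^{Q\tilde Q}(\mathbf x;\mathbf y)&=\frac{1}{\prod_{i\neq j}\Gampq(x_i/x_j)}\int\frac{\prod_{i,j}\Gampq(cy_j/w_i,\,dw_j/x_i,\,tx_i/dw_j)}{\prod_{i,j}\Gampq(cty_j/w_i)\prod_{i\neq j}\Gampq(w_i/w_j)}\,|\dup\mathbf w|,\\
K^{\tilde Q Q}(\mathbf x;\mathbf y)&=\frac{\prod_{i\neq j}\Gampq(tx_i/x_j)}{\prod_{i\neq j}\Gampq(ty_i/y_j,\,x_i/x_j)}\int\frac{\prod_{i,j}\Gampq(dy_j/w_i,\,tw_i/dy_j,\,cw_j/x_i)}{\prod_{i,j}\Gampq(ctw_j/x_i)\prod_{i\neq j}\Gampq(w_i/w_j)}\,|\dup\mathbf w|.
\end{align*}

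Second, applying the inversion identity \eqref{ginv} to convert the denominator $\Gampq$'s to numerator factors, each inner integral matches the definition \eqref{inm} of the elliptic Dixon integral $I_n^n$. Specifically, the first integral equals $\kappa_n^{-1}I_n^n(\mathbf a;\mathbf b)$ with
$$\mathbf a=(d/\mathbf x,\,pq/(ct\mathbf y)),\qquad\mathbf b=(c\mathbf y,\,t\mathbf x/d),$$
and the second equals $\kappa_n^{-1}I_n^n(\mathbf a';\mathbf b')$ with $\mathbf a'=(t/(d\mathbf y),\,c/\mathbf x)$ and $\mathbf b'=(d\mathbf y,\,pq\mathbf x/(ct))$. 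The balancing $\prod_i a_i\prod_j b_j=(pq)^n$ is immediate in both cases.

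Third, I would apply \eqref{rait} to $I_n^n(\mathbf a;\mathbf b)$. The combinatorial heart of the argument is that the prefactor $\prod_{i,j=1}^{2n}\Gampq(a_ib_j)$ simplifies dramatically: of its four $n\times n$ blocks, two are mutual inverses by \eqref{ginv} (the block $\Gampq(cdy_j/x_i)$ pairs with the block $\Gampq(pqx_i/(cdy_j))$) and cancel entirely, while the remaining two collapse, after the diagonal $\Gampq(t)^n$ factors cancel between numerator and denominator, to
$$\prod_{i,j=1}^{2n}\Gampq(a_ib_j)=\frac{\prod_{i\neq j}\Gampq(tx_i/x_j)}{\prod_{i\neq j}\Gampq(ty_i/y_j)}.$$
This is precisely the asymmetry between the two kernel prefactors displayed above. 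It then remains to verify that the Rains-transformed parameters $(pq/(s\mathbf b);\,s/\mathbf a)$, with $s^n=\prod_i a_i$, coincide with $(\mathbf a';\mathbf b')$ after invoking the elementary scaling invariance $I_n^n(\lambda\mathbf a;\mathbf b/\lambda)=I_n^n(\mathbf a;\mathbf b)$, applied with $\lambda=r^2$; this is a direct calculation. The main obstacle is exactly this parameter bookkeeping together with confirming that the parameter conditions underlying \eqref{rait} hold throughout the hypothesis region of the proposition. Once these are checked, combining the three displays yields $K^{Q\tilde Q}=K^{\tilde Q Q}$ directly.
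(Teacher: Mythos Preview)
Your strategy is exactly the paper's: express both compositions as $I_n^n$ integrals and apply the Rains transformation \eqref{rait}. Your computation of the prefactor $\prod_{i,j}\Gampq(a_ib_j)=\prod_{i\neq j}\Gampq(tx_i/x_j)/\prod_{i\neq j}\Gampq(ty_i/y_j)$ is correct and is indeed the crux of the calculation.

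However, there is a genuine gap. The inner integrations are over $\mathbb T^{n-1}_{r^n}$, not over $\mathbb T^{n-1}$, so your identification of the first integral with $\kappa_n^{-1}I_n^n(\mathbf a;\mathbf b)$ for $\mathbf a=(d/\mathbf x,\,pq/(ct\mathbf y))$, $\mathbf b=(c\mathbf y,\,t\mathbf x/d)$ is off by the change of variables $w_j=r u_j$. After this rescaling the parameters become $r\mathbf a$ and $\mathbf b/r$, which are precisely the paper's $(dr/\mathbf x,\,pqr/(ct\mathbf y);\,t\mathbf x/(dr),\,c\mathbf y/r)$. You try to absorb this discrepancy at the end via the ``elementary scaling invariance'' $I_n^n(\lambda\mathbf a;\mathbf b/\lambda)=I_n^n(\mathbf a;\mathbf b)$, but that identity is \emph{false}: already for $n=1$ one has $I_1^1(\mathbf a;\mathbf b)=\Gampq(a_1)\Gampq(a_2)\Gampq(b_1)\Gampq(b_2)$, which is visibly not invariant under $(\mathbf a,\mathbf b)\mapsto(\lambda\mathbf a,\mathbf b/\lambda)$. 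What \emph{is} true is that scaling the integration variable by $r$ converts the integral over $\mathbb T^{n-1}_{r^n}$ into $I_n^n(r\mathbf a;\mathbf b/r)$; there is no further invariance of $I_n^n$ itself.

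The fix is immediate and in fact simplifies your argument. Insert the $r$ factors from the start, so that the two inner integrals are $I_n^n(r\mathbf a;\mathbf b/r)$ and $I_n^n(r\mathbf a';\mathbf b'/r)$. Then in \eqref{rait} one has $s^n=\prod_j(ra_j)=(dpq/(ct))^n$, and the transformed parameters $(pq r/(s\mathbf b);\,s/(r\mathbf a))$ equal $(r\mathbf a';\,\mathbf b'/r)$ on the nose, with no residual $r^2$ to explain away. (Note that your prefactor computation is unaffected, since $a_ib_j$ is invariant under $(\mathbf a,\mathbf b)\mapsto(r\mathbf a,\mathbf b/r)$.) This is exactly how the paper proceeds.
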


\begin{proof}
We compute
$$(Q_c\tilde Q_d f)(\mathbf y)=\int_{\mathbb T^{n-1}_{r^n}}\frac{f(\mathbf x)}{\prod_{1\leq i\neq j\leq n}\Gampq(x_i/x_j)} \,K_{1}(\mathbf x;\mathbf y) \,|\dup \mathbf x|, $$
where
\begin{align*}K_{1}(\mathbf x;\mathbf y)&= \int_{\mathbb T^{n-1}_{r^n}}\frac{\prod_{i,j=1}^n\Gampq(dz_i/x_j,pqz_i/cty_j,tx_j/dz_i,cy_j/z_i)}{\prod_{1\leq i\neq j\leq n}\Gampq(z_i/z_j)} \,|\dup \mathbf z|\\
&=\frac 1{\kappa_n}\,I_n^n(dr/\mathbf x,pqr/ct\mathbf y;t\mathbf x/dr,c\mathbf y/r).
\end{align*}
The product $\tilde Q_dQ_c$ is given by a similar expression, with $K_1$ replaced by
$$ K_{2}(\mathbf x;\mathbf y)=\frac 1{\kappa_n}
\prod_{1\leq i\neq j\leq n}\frac{\Gampq(tx_i/x_j)}{\Gampq(ty_i/y_j)}\,
I_n^n(cr/\mathbf x,tr/d\mathbf y;pq\mathbf x/ctr,d\mathbf y/r).
$$
It follows from \eqref{rait} that, under the given parameter conditions,  $K_1=K_2$.
\end{proof}

\def\arraystretch{1.3}
\begin{table}[h]
\caption{Elliptic hypergeometric transformations and $Q$-operator identities}\label{sit}
\begin{tabular}{|c|c|}
\hline
{\bf Kajihara--Noumi--Rosengren} & {\bf Langer--Schlosser--Warnaar}\\ 
 $[H,\tilde H]=0$ & $[H,H]=[\tilde H,\tilde H]=0$\\
$H_{\mathbf x}\tilde K=H_{\mathbf y}\tilde K,\quad \tilde H_{\mathbf x} K=\tilde H_{\mathbf y} K$ & $H_{\mathbf x}K=H_{\mathbf y} K,\quad \tilde H_{\mathbf x} \tilde K=\tilde H_{\mathbf y} \tilde K$\\
$[H,\tilde Q]=[\tilde H,Q]=0$ &$[H,Q]=[\tilde H,\tilde Q]=0$ \\
\hline
  {\bf Rains} & {\bf Gadde--Rastelli--Razamat--Yan}\\
 $[Q,\tilde Q]=0$ & $[Q,Q]=[\tilde Q,\tilde Q]=0$\\
$Q_{\mathbf x}\tilde K=Q_{\mathbf y}\tilde K,\quad \tilde Q_{\mathbf x} K=\tilde Q_{\mathbf y} K$ & $Q_{\mathbf x} K=Q_{\mathbf y} K,\quad \tilde Q_{\mathbf x}\tilde K=\tilde Q_{\mathbf y}\tilde  K$\\
\hline
\end{tabular}
\end{table}

In conclusion, we have two families of commuting $Q$-operators $Q_c$ and $\tilde Q_c$, and two corresponding families of Noumi--Sano operators $H^{(k)}$ and $\tilde H^{(k)}$. 
Any commutation relation between two such operators can be reduced to a transformation formula for elliptic hypergeometric functions. The same can be said about the corresponding kernel function identities, both for the  kernel function \eqref{rkf} and for its gauge transform
$$\tilde K_c(\mathbf x;\mathbf y)=\frac {K_c(\mathbf x;\mathbf y)\big|_{t\mapsto pq/t}}{W(\mathbf x)W(\mathbf y)}=\frac{\prod_{i,j=1}^n\Gampq(cx_iy_j,t/cx_iy_j)}{\prod_{1\leq i\neq j\leq n}\Gampq(tx_i/x_j,ty_i/y_j)}.$$
We summarize these relations in Table \ref{sit}.
The southeast corner contains Theorem~\ref{mt} and some closely related results. Here, for instance, the
 entry $[Q,Q]=0$ is a short-hand for the fact that the integral kernel of
 $[Q_c,Q_d]$ vanishes for some range of parameters. It is easy to see that this is formally equivalent to 
  the kernel function identity $(Q_c)_{\mathbf x}K_d(\mathbf x;\mathbf y)=(Q_c)_{\mathbf y}K_d(\mathbf x;\mathbf y)$. 
  These results follow  from the 
GRRY conjecture that we have now proved.
   Likewise, the southwest corner contains Proposition~\ref{qhqp} and its equivalent forms, which follow from the integral transformation \eqref{rait}.
By Proposition \ref{qhp}, the operators $H^{(k)}$ and $\tilde H^{(k)}$ can be obtained as degenerations of $Q_c$ and $\tilde Q_c$. Therefore, it is not surprising that the entries in the upper half of the table follow from finite-sum degenerations of the GRRY and Rains identity. For instance,
in \cite{hlnr}
it was observed  that  $[\tilde H^{(k)},\tilde H^{(l)}]=0$ follows from a discrete version of the GRRY identity due to Langer et al.\ \cite{lsw} and also that
the kernel function identity $\tilde H_\mathbf x^{(k)}K_c(\mathbf x;\mathbf y)=\tilde H_\mathbf y^{(k)}K_c(\mathbf x;\mathbf y)$ follows from 
a discrete version of \eqref{rait} due to Kajihara and Noumi \cite{kn} and to the second author \cite{ro}.

 \section*{Appendix: Commutation between $Q$-operators and Ruijsenaars operators}

We will prove that $[Q_c,D^{(k)}]=0$. 
As was noted by Ruijsenaars,  \eqref{rki} is equivalent to the  theta function identity
\cite{kn}
\begin{multline*} \sum_{\substack{I\subseteq \{1,\dots,n\},\\ |I|=k}}\,\prod_{i\in I,j\notin I}\frac{\thp(tx_i/x_j)}{\thp(x_i/x_j)}\prod_{i\in I,\,1\leq j\leq n}
\frac{\thp(cx_iy_j)}{\thp(ctx_iy_j)}\\
=\sum_{\substack{I\subseteq \{1,\dots,n\},\\ |I|=k}}\,\prod_{i\in I,j\notin I}\frac{\thp(ty_i/y_j)}{\thp(y_i/y_j)}\prod_{i\in I,\,1\leq j\leq n}
\frac{\thp(cx_jy_i)}{\thp(ctx_jy_i)}.
\end{multline*}
Using this identity, with $\mathbf x\mapsto \mathbf x^{-1}$, gives
\begin{align*}
(D^{(k)}Q_cf)(\mathbf y)&= \sum_{\substack{I\subseteq \{1,\dots,n\},\\ |I|=k}}\int_{\mathbb T^{n-1}_{y_1\dotsm y_nq^k}}f(\mathbf x)
 \prod_{i\in I,j\notin I}\frac{\thp(ty_i/y_j)}{\thp(y_i/y_j)}\prod_{i\in I,\,1\leq j\leq n}
\frac{\thp(cy_i/x_j)}{\thp(cty_i/x_j)}\\
&\quad\times\prod_{1\leq i,j\leq n}\frac{\Gampq(cy_j/x_i)}{\Gampq(cty_j/x_i)}
\prod_{1\leq i\neq j\leq n}\frac{\Gampq(tx_i/x_j)}{\Gampq(x_i/x_j)} \,|\dup \mathbf x|\\
&= \sum_{\substack{I\subseteq \{1,\dots,n\},\\ |I|=k}}\int_{\mathbb T^{n-1}_{y_1\dotsm y_nq^k}}f(\mathbf x)
 \prod_{i\in I,j\notin I}\frac{\thp(tx_j/x_i)}{\thp(x_j/x_i)}\prod_{i\in I,\,1\leq j\leq n}
\frac{\thp(cy_j/x_i)}{\thp(cty_j/x_i)}\\
&\quad\times\prod_{1\leq i,j\leq n}\frac{\Gampq(cy_j/x_i)}{\Gampq(cty_j/x_i)}
\prod_{1\leq i\neq j\leq n}\frac{\Gampq(tx_i/x_j)}{\Gampq(x_i/x_j)} \,|\dup \mathbf x|.
\end{align*}
The factors $\thp(x_i/x_j)^{-1}$ have singularities on the domain of integration, but these are cancelled by zeroes of $\Gampq(x_i/x_j)^{-1}$.
We now make the change of variables $x_i\mapsto qx_i$ for all $i\in I$. Then,
$$ \prod_{i\in I,\,1\leq j\leq n}
\frac{\thp(cy_j/x_i)}{\thp(cty_jx_i)}\prod_{1\leq i,j\leq n}\frac{\Gampq(cy_j/x_i)}{\Gampq(cty_j/x_i)}\mapsto\prod_{1\leq i,j\leq n}\frac{\Gampq(cy_j/x_i)}{\Gampq(cty_j/x_i)},$$
$$ \prod_{i\in I,j\notin I}\frac{\thp(tx_j/x_i)}{\thp(x_j/x_i)}\prod_{1\leq i\neq j\leq n}\frac{\Gampq(tx_i/x_j)}{\Gampq(x_i/x_j)}
\mapsto\prod_{i\in I,j\notin I}\frac{\thp(tx_i/x_j)}{\thp(x_i/x_j)}\prod_{1\leq i\neq j\leq n}\frac{\Gampq(tx_i/x_j)}{\Gampq(x_i/x_j)}.
 $$
 This shows that
 \begin{align*}
(D^{(k)}Q_cf)(\mathbf y)&=\int_{\mathbb T^{n-1}_{y_1\dotsm y_n}}\sum_{\substack{I\subseteq \{1,\dots,n\},\\ |I|=k}}
\prod_{i\in I,j\notin I}\frac{\thp(tx_i/x_j)}{\thp(x_i/x_j)}
\prod_{i\in I}(T_{q,x_i}f)(\mathbf x)
 \\
&\quad\times\prod_{1\leq i,j\leq n}\frac{\Gampq(cy_j/x_i)}{\Gampq(cty_j/x_i)}
\prod_{1\leq i\neq j\leq n}\frac{\Gampq(tx_i/x_j)}{\Gampq(x_i/x_j)} \,|\dup \mathbf x|\\
&=(Q_cD^{(k)}f(\mathbf y).
\end{align*}

\end{document}